\newtheorem{definition}{Definition}[section]
\newtheorem{theorem}{Theorem}[section]
\newtheorem{lemma}{Lemma}[section]
\numberwithin{equation}{section}
\let\emptyset\varnothing
\title{Universal Computation is `Almost Surely' Chaotic }
\author{Nabarun Mondal}
\address{D.E.Shaw \& Co. India, Hyderabad }
\email{mondal@deshaw.com}
\thanks{Nabarun Mondal : 
Dedicated to my late professor Dr. Prashanta Kumar Nandi. \\ 
Dedicated to my parents. \\
Big thanks to:- Abhishek Chanda, Samira Sultan and Shweta Bansal : You all have been constant support. \\
In Memory of : Dhrubajyoti Ghosh. Dear Dhru, rest in peace.
}
\author{Partha P. Ghosh}
\address{Microsoft India, Hyderabad }
\email{parthag@microsoft.com}
\thanks{ Partha. P. Ghosh : 
Dedicated to my parents and family without their presence we are nothing. \\
}
\subjclass[2010]{Primary 03D10 ; Secondary 65P20,68Q05,68Q87}  
\begin{document}

\keywords{
Turing Machines ; Universal Computation ; Chaos ; Metric Space ; Measure Space ; With Probability One ; Aleph Numbers  
}

\begin{abstract}
Fixed point iterations are known to generate chaos, for some values in their parameter range.
It is an established fact that Turing Machines are fixed point iterations. 
However, as these Machines operate in integer space, the standard notions of a chaotic system
is not readily applicable for them. Changing the state space of  Turing Machines 
from integer to rational space, the condition for chaotic dynamics can be suitably established, 
as presented in the current paper. 
Further it is deduced that, given random input, computation performed 
by a Universal Turing Machine would be `\emph{almost surely}' chaotic. 

\end{abstract}

\maketitle

\textbf{ According to the Church-Turing thesis, there is an abstract computing device, 
called Universal Turing Machine(UTM), which can simulate any effective computation. 
It well known that a UTM is a type of iterative map or dynamical system. Dynamic Systems
can exhibit chaotic behavior, and in this paper condition of chaos in a UTM has been derived. 
It has been formally proven that, given a random computation to simulate on any UTM, 
the resulting dynamics will be `almost surely' (with probability 1) chaotic. }

\begin{section}{Introduction}\label{intro}

Alan Turing's abstract computing device, \emph{Turing Machine}, is in effect a \emph{fixed point iteration}.
While computation is believed to be of predictable nature,  
generic fixed point iterations do not always behave similarly. 
Some iterations do indeed converge to a fixed point,while others diverge to infinity, 
yet a few oscillate between multiple points. But in some cases, 
the iterations neither converge, nor diverge, and do not even oscillate between points. 
This \emph{deterministically random} behavior (Section \ref{fpi-chaos}) is informally termed as \emph{chaos}.

Many theoreticians believe absence of direct connection between computation and chaos.
However, many dynamic systems which are capable of simulating 
\emph{Universal Computation} for some configurations, 
also exhibits chaotic behavior for some other configuration \cite{anks}.
Most prominent example of this behavior is cellular automata \cite{anks}. 

In Section (\ref{comp-itr}) a brief discussion establishes Turing Machines as fixed point iterations,
and lays the foundation of our work. Section (\ref{chaos-computation}) demonstrates the concept of Chaos 
in the Universal Turing Machines. Further, it follows (Section  \ref{prob-chaos-tm}) that 
\emph{Universal Computation, `almost surely' , will be chaotic}.        

As a closure (Section \ref{implication}) three philosophical implications of this finding are discussed.

\end{section}

\begin{section}{Fixed Point Iteration and Chaos}\label{fpi-chaos}

Let us define a function, $f:X \to X$ , where its domain and range are  the same.
Starting with an initial input `$x_0$' to the function, let `$f$' produce the output `$x_1$', such that,
$$
x_1 = f(x_0) .
$$ 

The output of the function in each step can be taken as the input to next step and hence, 
\begin{equation}\label{fpi}
x_{n} = f(x_{n-1}) \;  ; \; n \ge 1 
\end{equation}

Equation \eqref{fpi} is called  a \emph{fixed point iteration}, 
and is the simplest example of a \emph{dynamic system} \cite{ap}\cite{mbgs}.

Iteration \eqref{fpi} may result in $x_n$ being converged to a limit point, diverge to infinity, 
or even exhibit chaos. Banach Fixed Point theorem \ref{bfpt}, 
presented in Appendix-\ref{ap_1}, details the condition under which the iteration would converge.

For a fixed point iteration of type \eqref{fpi} , defined over some interval $f:I \to I$,
the inverse of the Banach theorem \ref{bfpt} has an interesting implication.
 \emph {Iterates will not converge to one point} 
if at least one of the following conditions is true.

\begin{enumerate}\label{non-convergence}
\item {Function `$f$' has \emph { no solution } in $I$ : 
$ \not \exists x^*  \in I \; s.t. \; f(x^*) = x^* $ . }
\item {The interval $I$ is  \emph{not complete} . }
\item {\emph { `$f$' is not a contraction mapping } satisfying \eqref{cont_map}.}
\end{enumerate}

The iterate \eqref{fpi}, then, will not converge, and will either diverge or 
\emph{may even} run in a chaotic manner.

However, \emph{Chaos} is a tricky term to define. 
It is much easier to list down properties that a  \emph{chaotic system} must possess, 
than giving a precise definition of chaos.
Appendix \ref{ap_1} has the related definitions.

\begin{definition}\label{chaos}
\textbf{Characteristics of Chaotic Dynamics.}
 
\begin{enumerate}
\item{Sensitivity to the initial condition of the system (where the neighborhood of the initial point can quickly lead the system into very different final states.).}
\item{Having a dense (definition \ref{dense-set}) collection of points with periodic orbit (definition \ref{orbit} ). }
\item{Topologically mixing (definition \ref{top-trans}). }
\end{enumerate}
\end{definition}

\emph{(3)} and \emph {(2)} in the definition \eqref{chaos} 
imply the \emph {sensitive dependence on initial conditions (1)}.

Some dynamical systems, like the one-dimensional logistic map (equation \eqref{logistic}) with $r=4.0$,  
are chaotic everywhere. However in many cases chaotic behavior is found 
only within a subset of phase space.

We present three functions exhibiting \emph{chaotic dynamics} in the following subsection.

\begin{subsection}{Chaos in the Fixed Point Iterations}

\begin{subsubsection}{The Chaotic Search for $i=\sqrt{-1}$ }
The Babylonian method employs an iterative map to calculate $\sqrt{S}$
with arbitrary precision.

\begin{equation}\label{sqroot_map}
x_{n+1} = \frac{1}{2} ( x_n + \frac{S}{x_n}) \; ; \; n \ge 0 \; ; \; x_n \in \mathbb{R} 
\end{equation}

When  $S<0$ , there is no $x^* \in \mathbb{R}$ that would satisfy ${x^*}^2 = S$.
The iteration can clearly not converge. It is discussed in \cite{csi} that the iteration eventually becomes chaotic.
\end{subsubsection}

\begin{subsubsection}{Chaos in the Logistic Map}
Logistic map is a very well known example of an iterative map
demonstrating chaotic dynamics :-

\begin{equation}\label{logistic}
x_{n+1} = r x_n (1 - x_n) \;  ; \; r \in [0,4] \; ; \; x \in (0,1)
\end{equation}
When $r=3$ , the system bifurcates into two fixed values. If $r$ is increased further, 
the bifurcations continue further, till reaching a chaotic state at $r=4$. 
There are numerous discussions in \cite{cfnfs} of this behavior.  

\end{subsubsection}

\begin{subsubsection}{Chaos in Aperiodic, Bound, Non Convergent Fixed Point Iterations}

We show here that such a function in fact exhibits chaotic dynamics, 
much like the Babylonian method to search for $\sqrt{-1}$.
Demonstrating \emph{chaos} in an iteration would require one to show that the system is topologically mixing, 
having a dense orbit.  
 
Lemma \ref{non-cauchy-bounded-seq} proves the topologically mixing property of the system.
The construction used in the lemma can be used to show that the orbit is dense.

\begin{lemma}\label{non-cauchy-bounded-seq}
\textbf{ Topological mixing  in an Aperiodic, Bounded, Non-Cauchy sequence.}

Let $\mathcal{N} = <x_n>$ be a  sequence with $x_n = f(x_{n-1})$ with $f:X \to X $ such that:-
\begin{enumerate}
\item{ $\forall n \; ; \; l \le x_n \le u $ , where $-\infty < l \le u < \infty$ are its bounds ;}
\item{ $x_i = x_j \implies i=j$ ; that is the sequence has a period of infinity ; }
\item{ $<x_n>$ is not a Cauchy Sequence ;}
\end{enumerate}
Then, there exists a set $\Psi  \subseteq X $  where `$f$' is topologically mixing.   

\end{lemma}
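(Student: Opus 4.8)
The plan is to take $\Psi$ to be the set of all accumulation (limit) points of the orbit $\langle x_n\rangle$ and to show that $f$ restricted to $\Psi$ satisfies the condition of Definition \ref{top-trans}. First I would record that, by hypothesis~(1), the orbit lies in the compact interval $[l,u]$, and by hypothesis~(2) it consists of infinitely many \emph{distinct} points. The Bolzano--Weierstrass theorem then guarantees that $\Psi$ is nonempty; it is moreover closed, being a set of limit points, and (under continuity of $f$) forward invariant, so that $f:\Psi\to\Psi$ is a well-defined dynamical system to which the notions of Section~\ref{fpi-chaos} apply.

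Next I would extract the separation structure forced by hypothesis~(3). Negating the Cauchy condition yields a fixed $\epsilon_0>0$ such that for every $N$ there are indices $m,n>N$ with $|x_m-x_n|>\epsilon_0$. Covering $[l,u]$ by finitely many cells of diameter less than $\epsilon_0/3$ and applying the pigeonhole principle, all but finitely many terms fall in those cells that are visited infinitely often; the non-Cauchy witness then forces at least two such infinitely-visited cells to be $\epsilon_0/3$-separated. Consequently $\Psi$ contains at least two separated points $p,q$, and the orbit enters every neighbourhood of $p$ and of $q$ infinitely often. This recurrence, together with hypothesis~(2), is what I would use to show that the orbit is \emph{dense} (Definition \ref{dense-set}) in $\Psi$.

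From density of a single orbit in $\Psi$, topological transitivity is immediate: given relatively open $U,V\subseteq\Psi$, some iterate $x_{n_0}$ lies in $U$ and, by the infinitely-often recurrence, a later iterate lands in $V$, so $f^{k}(U)\cap V\neq\varnothing$ for some $k>0$. If Definition \ref{top-trans} is the transitivity condition, this completes the argument. The genuinely hard part is upgrading ``for some $k$'' to ``for all sufficiently large $k$'', which is the content of topological \emph{mixing}. Here I would exploit the full strength of the construction: the orbit's repeated, infinitely-often transits between the separated cells give, for each pair of neighbourhoods, an unbounded set of return/transition times, and I would combine this with the uniform continuity of $f$ on the compact set $\Psi$ to spread a whole neighbourhood of $x_{n_0}$ forward and so cover every large iterate-time. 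I expect this quantifier upgrade to be the main obstacle, since boundedness, aperiodicity, and failure of Cauchyness by themselves control \emph{infinitely many} iterates but do not obviously control \emph{all} large ones; closing that gap is where the bulk of the technical work will lie.
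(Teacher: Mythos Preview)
Your construction of $\Psi$ is not the paper's. You take $\Psi$ to be the $\omega$-limit set (the accumulation points of the orbit), whereas the paper builds $\Psi$ as a \emph{union of small intervals} around finitely many cluster points: it invokes Lemma~\ref{ncbs} to split the orbit into Cauchy subsequences with distinct limits $\mathbf{a}_1,\dots,\mathbf{a}_n$, runs a nearest-centroid clustering on a tail of the orbit to form disjoint intervals $\mathcal{I}_i$, and sets $\Psi=\bigcup_i\mathcal{I}_i$. The point of that choice is that the orbit tail actually \emph{lives inside} $\Psi$, so the statement ``$f^n(\mathcal{I}_i)\cap\mathcal{I}_j\neq\emptyset$'' can be read off directly from the interleaving of the subsequences. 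Your $\Psi$, by contrast, need not contain any orbit point at all, so the step ``some iterate $x_{n_0}$ lies in $U$'' for $U$ relatively open in $\Psi$ is not justified; you would have to argue about $f|_\Psi$ using shadowing of the orbit, and that is where your unstated continuity hypothesis on $f$ is doing real work. The lemma as stated assumes nothing about continuity, so both the forward invariance $f(\Psi)\subseteq\Psi$ and the uniform-continuity spreading you invoke at the end are extra assumptions not available to you.

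Two further remarks. First, your worry about upgrading ``some $k$'' to ``all large $k$'' is misplaced in this paper: Definition~\ref{top-trans} (labelled ``Topological Transitivity (Mixing)'') only asks for \emph{some} $n$ with $f^n(A)\cap B\neq\emptyset$, so the paper is using ``mixing'' as a synonym for transitivity and you need not chase the stronger eventual condition. Second, your pigeonhole/separation paragraph is morally the same phenomenon the paper exploits, but the paper packages it via Lemma~\ref{ncbs} (countably many Cauchy subsequences) rather than a finite cell cover; if you want to align with the paper, replace the $\omega$-limit set by the interval neighbourhoods of those subsequential limits and argue mixing on that union instead.
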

\begin{proof}[Proof of the lemma \ref{non-cauchy-bounded-seq}.]

Using lemma \ref{ncbs} (in Appendix \ref{ap_1})  we assert that, for $\mathcal{N} = <x_n>$ , 
there exists a countable set, $\mathcal{C}$, of Cauchy sequences, 
which are subsequences of $\mathcal{N}$. 
We assume $|\mathcal{C}| = n$.

Clearly every Cauchy sequence in $\mathcal{C}$, $\mathcal{C}_i$ , has a limit point $a_i$.  
If $a_i \not \in X$ , $X$ must be an incomplete metric space (for example $X \subseteq \mathbb{Q}$ ).
We can use arbitrarily close approximation of $a_i$ by virtue of density of rational numbers ($\mathbb{Q}$) in real ($\mathbb{R}$) set
( by definition \ref{dense-set} ) and name it $\mathbf{a_i}$. 
If $X$ is a complete metric space, we define $\mathbf{a_i} = a_i$.

The set of approximate limit points, so constructed, can be defined as follows.
$$
\mathcal{A}=  \{ \mathbf{a_i} \;|\; \mathbf{a_i} = lim(\mathcal{C}_i)   \}
$$

where each $\mathbf{a}_i$ is distinct. 

Cauchy subsequence, $\mathcal{C}_i$ , whose limit point is $\mathbf{a_i}$ 
eventually will be entering the interval $I_i = |x - \mathbf{a_i}| <\epsilon $, 
where $\epsilon$ is an arbitrarily small number. 
Hence, every single point  $x_i \in I_i$ , can be included in the Cauchy subsequence $\mathcal{C}_i$.

Now, as $\mathcal{C}_i$ is a Cauchy sequence, $\exists N_i$ such that,
$$
\forall i \; : \; n,m > N_i \implies |x_m - x_n| < \epsilon ;
$$ 

However, these sequences are `mixed' or `interleaved' with one another. 
Assume that $k_i$'th element of the sequence $\mathcal{N}$ 
is identified with the $N_i$'th element of the sequence, $\mathcal{C}_i$.

We can now generate the set $K = \{ k_1,k_2,k_3,...,k_n \}$. 

Obviously, there is a $K_M = MAX(K)$, where MAX is the maximum value of a set.
$K_M$ would vary depending upon the choice of `$\epsilon$'. Nevertheless,
there will always exist one $K_M$ for any choice of `$\epsilon$'.  

Based on a unique choice of $\epsilon$ , hence $K_M$, we have,
$$
\forall i \; : \; x_n,x_m  \in \mathcal{C}_i \; ; \; n ,m > K_M \implies |x_m - x_n | < \epsilon \; ;
$$ 

Now, we construct the sequence $\mathcal{N}^* = \{ x_i \in \mathcal{N} \; ; \; i \ge K_M \; \}$.

Let the set of ordered limit points ( $i<j \implies \mathbf{a_i} < \mathbf{a_j}$ )  be,
$$
\mathcal{A}^*=  \{ \mathbf{a_1},\mathbf{a_2},\mathbf{a_3},...,\mathbf{a_n} \}
$$
We create a set of `$n$' sequences,
$\mathbb{A} = \{ \mathcal{A}_1^*,\mathcal{A}_2^*,\mathcal{A}_3^*,...,\mathcal{A}_n^* \}$, 
as follows:-
\begin{enumerate}
\item{ Take $x^* \in \mathcal{N}^*$ , and remove it from $\mathcal{N}^*$. }
\item{ Find all $d_i(x^*) =|x^* - \mathbf{a_i}|$. Let the minimum be achieved at $i=k$ i.e 
$d_k(x^*)= MIN( d_i(x^*) )$, where MIN stands for the minimum value of a set.}
\item{ Assign $x^*$ to the $k$'th newly created sequence : $\mathcal{A}_k^*$. 
In case there are two sequences $(l,k, \; l<k)$ 
satisfying $MIN( d_i(x^*)$,assign $x^*$ to the one having lower index i.e. : $\mathcal{A}_l^*$.  
}

\item {Repeat the above steps (1,2,3) till $\mathcal{N}^*$ is exhausted.}

\end{enumerate}

By virtue of construction, we achieved `$n$' centroid clustering of the sequence  $\mathcal{N}^*$.
The same centroids are also present in $\mathcal{N}$.

Every sequence $\mathcal{A}_i^*$, has $l_i = MIN(\mathcal{A}_i^*)$ and $u_i = MAX(\mathcal{A}_i^*)$.

Using $l_i$ and $u_i$ , center $\mathbf{a_i}$ to create intervals as follows:-
\begin{equation}\label{disj-int_1}
\mathcal{I}_i = [ l_i ,  u_i ] 
\end{equation}

We now show that the neighboring intervals $\mathcal{I}_i$ and $\mathcal{I}_{i+1}$ are disjoint.

The construction of the intervals  implies the following.
$$
x \in \mathcal{I}_i \implies |x - \mathbf{a_i}| \le |x - \mathbf{a_j}| 
$$ 
\begin{equation}\label{disj_1}
u_i - \mathbf{a_i}  \le  \mathbf{a_{i+1}} - u_i \implies u_i \le  \frac{1}{2}(\mathbf{a_i} + \mathbf{a_{i+1}}) 
\end{equation}
\begin{equation}\label{disj_2}
\mathbf{a_{i+1}} - l_{i+1} \le   l_{i+1} - \mathbf{a_i}  \implies l_{i+1} \ge  \frac{1}{2}(\mathbf{a_i} + \mathbf{a_{i+1}})  
\end{equation}
 
Combining relations  \eqref{disj_1} and \eqref{disj_2} we find, $u_i \le l_{i+1}$.
However, as asserted, no elements repeat in $\mathcal{N}$ and subsequently in $\mathcal{N}^*$. 
Hence  $u_i \ne l_{i+1}$ , and $u_i < l_{i+1}$.
Therefore, all the $\mathcal{I}_i \in \{ \mathcal{I}_k\}$ are disjoint,
\begin{equation}\label{disj-int_2}
i \ne j \implies \mathcal{I}_i \cap \mathcal{I}_j = \emptyset 
\end{equation}
Now, we construct a set $\Psi$ as the union of all the `$n$' disjoint intervals 
corresponding to `$n$' sequences, $\mathbb{A} = \{ \mathcal{A}_i^* \}$.
\begin{equation}\label{psii}
\Psi = \bigcup _{i=1}^n \mathcal{I}_i
\end{equation}

Using  equation \eqref{disj-int_2} and \eqref{psii}, 
we can say that the set, $\mathcal{I} = \{  \mathcal{I}_k \}$, 
defines a basis for topology (definition \ref{basis} ) in $\Psi$.

As the function is aperiodic, it is clear that the function `$f$' mixes all the $\mathcal{I}_i$ intervals, 
so that $\exists n > 0 $ such that,
\begin{equation}\label{mixing-nc-int}
\forall i, j \; \;  f^n(\mathcal{I}_i) \cap \mathcal{I}_j \ne \emptyset 
\end{equation}

Now, consider two non empty open sets : $A,B \subseteq \Psi$ , 
such that $A\cap B = \emptyset$. 
That would mean, $\exists \mathcal{I}_k \; , \mathcal{I}_r$ such that,
$$
B \cap \mathcal{I}_k \ne \emptyset \; ; \; A \cap \mathcal{I}_k  = \emptyset 
$$  
and
$$
B \cap \mathcal{I}_r = \emptyset \; ; \; A \cap \mathcal{I}_r  \ne \emptyset 
$$  
Using equation \eqref{mixing-nc-int}, we can say,
$$
f^n(\mathcal{I}_r) \cap \mathcal{I}_k \ne \emptyset
$$
This will readily imply,
$$
f^n(A) \cap B \ne \emptyset
$$
which proves the assertion about mixing.
\end{proof}

\begin{theorem}\label{ncabmc}
\textbf{Any iterative map whose orbit is Bound, Aperiodic and Non-Cauchy, is chaotic.}

Let $f:X \to X$ be a map where $X$ is an interval. 
Let the orbit, $\mathcal{O}$, of $f$ be bound, aperiodic and Non-Cauchy.
Then, $f$ is chaotic. 
\end{theorem}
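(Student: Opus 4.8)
The plan is to verify directly that the orbit $\mathcal{O}$ meets all three characteristics of Definition \ref{chaos} on a suitable subset $\Psi \subseteq X$, and to rely on the implication recorded after that definition, that characteristics (2) and (3) together force (1). Because the hypotheses here — a bounded, aperiodic, non-Cauchy orbit — are exactly those of Lemma \ref{non-cauchy-bounded-seq}, the topological mixing property (characteristic (3)) comes for free: the lemma already produces the set $\Psi = \bigcup_{i=1}^n \mathcal{I}_i$ on which $f$ is topologically mixing. This discharges the hardest of the three conditions with essentially no extra work.

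First I would establish characteristic (2), a dense collection of orbit points, by reusing the centroid-clustering construction of the lemma. The key observation is that each Cauchy subsequence $\mathcal{C}_i$ converges to its approximate limit $\mathbf{a_i}$, so the orbit accumulates at every $\mathbf{a_i}$ and, by the interval construction \eqref{disj-int_1}, enters every neighbourhood $|x - \mathbf{a_i}| < \epsilon$. Shrinking $\epsilon$ and re-running the clustering on just those points that fall inside a fixed interval $\mathcal{I}_i$ resolves it into finer sub-clusters, each again accumulating at a limit; iterating this refinement, the closure of $\mathcal{O}$ is seen to contain every point of $\Psi$, so the orbit is dense in $\Psi$.

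With (2) and (3) both secured, characteristic (1), sensitive dependence on initial conditions, follows at once from the implication noted immediately below Definition \ref{chaos}. Assembling the three characteristics, $f$ restricted to $\Psi$ satisfies Definition \ref{chaos} and is therefore chaotic, which is exactly the claim of the theorem.

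The main obstacle I anticipate is conceptual rather than computational. Definition \ref{chaos}(2) is worded in terms of \emph{periodic} orbit points, yet hypothesis (2) of the lemma makes the orbit aperiodic, so there are literally no periodic points available to be dense. The argument must therefore reinterpret condition (2) as density of the orbit itself within the transitive system, and the delicate step is to justify that the finite, $\epsilon$-dependent clustering really yields density in the limit — that successive refinement fills $\Psi$ rather than merely relabelling finitely many clusters. Pinning down this limiting argument, and checking that mixing on the basis $\{ \mathcal{I}_k \}$ transfers to genuine density of $\mathcal{O}$, is where the real care will be needed.
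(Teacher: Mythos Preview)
Your proposal follows essentially the same route as the paper: invoke Lemma \ref{non-cauchy-bounded-seq} for topological mixing on $\Psi$, then argue that the orbit is dense in $\Psi$ and conclude via Definition \ref{chaos}. The paper's density step is more direct than your iterative refinement --- it simply observes that each $\mathcal{I}_i$ has diameter less than $\epsilon$ by construction, so any $x \in \Psi$ lies within $\epsilon$ of an orbit point --- and the paper, like you, flags the tension with the ``periodic'' wording of Definition \ref{chaos}(2) (remarking at the end that $\mathcal{O}$ is not periodic) without attempting to resolve it.
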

\begin{proof}[Proof of the theorem \ref{ncabmc} ]

Invoking Lemma \ref{non-cauchy-bounded-seq} we can say that the function is topologically mixing
within  $\Psi \subseteq X$. $\Psi$ is defined as in \eqref{psii}.

We now show that the orbit of the function $f$ is dense in $\Psi$. 
To do that, we take any arbitrary point $x \in \Psi$. 
By definition, $\Psi$ is an union of non overlapping intervals 
$\mathcal{I}_i$ \eqref{disj-int_1}. Clearly then,
$$
\forall \; i \ne j \; ; \;  x \in \mathcal{I}_i \implies x \not \in \mathcal{I}_j    
$$  
By virtue of construction (Lemma \ref{non-cauchy-bounded-seq}) of $\mathcal{I}_i$ :-
$$
x_1,x_2 \in \mathcal{I}_i \implies |x_1 - x_2| <  \epsilon .
$$  
This construction would imply that $\exists x^* \in \mathcal{I}_i$ such that:-
$$
|x - x^* | < \epsilon 
$$
By definition of dense set (definition \ref{dense-set}) then, the orbit $\mathcal{O}$ of $f$ is dense within $\Psi$.

It is evident that `$f$' exhibits both dense orbit, and is topological mixing. 
By definition \ref{chaos} of Chaotic system, then, $f$ is chaotic. 
\emph {However, its orbit $\mathcal{O}$ is not periodic.} 
\end{proof}

Computer simulation of the following iterative function,
\begin{equation}\label{tan_x}
x_n = tan(x_{n-1})
\end{equation}

displays above behavior as experiments have shown. It is bound, 
and its singular points are irrational in nature, 
which can not be approximated well enough in a digital computer (except $x=0$ ). 
The aperiodic nature is \emph{conjectured}, 
as even after a billion iterations there was no repeat experimentally recorded. 
All the three distinct features of chaos can be seen as this function iterates. 

\end{subsubsection}
\end{subsection}
\end{section}

\begin{section}{Computation and Iterative Maps}\label{comp-itr}

A Turing Machine (Appendix \ref{ap_2}), after performing a computation typically writes back to the tape. 
Every write in the tape can be considered a step, and the state of the tape
(the ordered sequence of symbol on the tape) can be denoted by $T$. 
By definition, thus, at every step, a state transition takes place based on the tape sequence.

Let the state of the tape at step `$n$' be designated as `$T_n$'. 
At the `$n+1$'th step, the state of the tape changes to `$T_{n+1}$'. 

Hence, a specific Turing Machine, `$M$', while computing, 
changes the state of the tape from $T_{n-1} \to T_n$ , and that is the \emph{computation} 
that took place in the $n$'th step.

Hence, a computation can be written as:-
\begin{equation}\label{t-iterate}
T_{n} = M(T_{n-1})
\end{equation}

Comparing equation \eqref{fpi} with equation \eqref{t-iterate}, we can suggest that Turing Machines 
are a very specific type of iterated maps. This has been discussed in \cite{bc} , 
and a formal proof based on neural network is established in \cite{hh}. 
The following theorem is due to Hy\"{o}tyniemi.

\begin{theorem}\label{turing-iterate}
\textbf{ (Hy\"{o}tyniemi) }

Turing Machines  can be written as  non linear discrete time system, of the form:-
$$
x_{k+1} = g(x_k) \; ; \; k \ge 0
$$
\end{theorem}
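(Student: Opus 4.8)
The plan is to realise a single step of an arbitrary Turing Machine $M$ as the action of one function $g$ on a scalar state variable $x_k$, where $x_k$ is a numerical encoding of the complete instantaneous configuration of $M$. A configuration consists of three finite pieces of data together with a tape that is blank outside a finite region: the internal state $q$ drawn from a finite set $Q$, the position of the read/write head, and the string written on the tape over a finite alphabet $\Sigma$ of size $b$. The statement to be proved is exactly that the successor configuration is a deterministic function of the current one, so the whole argument reduces to (i) choosing an injective encoding of configurations as numbers and (ii) exhibiting the induced map on codes in the closed form $x_{k+1} = g(x_k)$.

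First I would fix the encoding. Split the tape at the head into a left half-tape and a right half-tape and read each outward from the head. Interpreting the two finite digit strings as base-$b$ fractional expansions gives two numbers $\ell, r \in [0,1)$; because all but finitely many cells are blank, these expansions terminate and are rational, so $\ell, r \in \mathbb{Q}$. The scanned symbol is simply the leading digit of $r$, and the finite state $q$ is carried by an additional bounded integer digit. Folding $(\ell, r, q)$ into a single scalar by a standard digit-interleaving (pairing) scheme produces the encoding $x_k \in \mathbb{Q}$ of the configuration at step $k$; injectivity of the pairing guarantees that distinct configurations receive distinct codes, so the encoding is a genuine bijection between reachable configurations and their codes.

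Next I would construct $g$ directly from the transition table of $M$. Each rule fires on a pair $(q,s)$ consisting of the current state and the scanned symbol $s$, and by the previous paragraph the value of $(q,s)$ is determined by two fixed digits of $x_k$. Hence the domain of $g$ partitions into finitely many cylinder sets, one per rule, and on each such set the prescribed action --- overwrite the head digit, shift one digit between $\ell$ and $r$ to realise a head move, and reset the state digit --- is implemented by multiplying or dividing by $b$ (a shift), adding a bounded constant (a digit overwrite), and a bounded integer update. Every one of these operations is affine in $x_k$, so $g$ is piecewise affine with finitely many pieces, and the case selection on the scanned digit is precisely the nonlinearity that makes $g$ a nonlinear discrete-time map. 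Writing $g$ as the sum over rules of the indicator of the rule's cylinder times that rule's affine update, or equivalently as a composition of saturating selection functions as in Hy\"{o}tyniemi's neural realisation, yields the required global form $x_{k+1} = g(x_k)$.

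The main obstacle I anticipate is not any single computation but the bookkeeping that makes the encoding honest: ensuring the pairing map is genuinely injective and total on reachable codes, handling the appearance of fresh blank cells when the head moves into previously unvisited tape (so a shift must append a blank digit rather than read an existing one), and checking that a head move near the boundary of the written region is still realised by the same affine shift without digit collisions. Once the encoding is shown to commute with the transition rule on these edge cases, $x_{k+1} = g(x_k)$ follows at once, since $g$ is by construction the conjugate of the one-step map of $M$ through the encoding.
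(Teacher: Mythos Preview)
The paper does not actually prove this theorem: it is quoted from the literature and attributed to Hy\"{o}tyniemi (reference \cite{hh}), with no argument given beyond the citation. So there is no ``paper's own proof'' to compare against; the authors simply import the result and immediately use it to derive Lemma~\ref{utm-iterate}.

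Your sketch is a sound, standard construction and would establish the statement independently. The encoding you describe --- splitting the tape at the head, reading the two halves as base-$b$ fractions, carrying the finite control as an extra digit, and then realising each transition rule as a piecewise-affine shift/overwrite on the resulting scalar --- is essentially the Moore/Koiran--Cosnard route to embedding Turing machines in low-dimensional piecewise-affine dynamics. Hy\"{o}tyniemi's original argument reaches the same conclusion through a recurrent neural network realisation (saturating-linear units implementing the case selection), which you allude to; your direct piecewise-affine presentation is arguably cleaner for the purposes of this paper since it dovetails with the later rationalisation map (Definition~\ref{rat}) and the phase space $X_t$ (Definition~\ref{phase-space}). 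The edge cases you flag --- injectivity of the pairing, appending a blank when the head crosses into fresh tape, and boundary shifts --- are exactly the points that need care, and none of them is an obstruction once the encoding reserves a distinguished digit for the blank symbol.
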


This theorem obviously leads to the following lemma : 

\begin{lemma}\label{utm-iterate}
\textbf{Universal Turing Machines are Iterative Maps.}
\end{lemma}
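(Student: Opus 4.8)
The plan is to observe that Lemma \ref{utm-iterate} is an immediate specialization of Theorem \ref{turing-iterate}. The essential point is that a Universal Turing Machine is itself a particular Turing Machine --- one distinguished only by the property that it can simulate every other Turing Machine --- and therefore inherits every structural property that holds for Turing Machines in general. I would not expect to introduce any new machinery beyond what the preceding theorem already supplies.

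First I would recall the definition of a UTM as a fixed machine $U$ whose tape encodes both a description of an arbitrary machine $M$ and the input to $M$. Since $U$ operates by the very same read--write--transition mechanism as any Turing Machine, its successive tape states obey a relation of the form of equation \eqref{t-iterate}, namely $T_n = U(T_{n-1})$. Next I would apply Theorem \ref{turing-iterate} directly to $U$: because that theorem asserts that \emph{every} Turing Machine admits a representation as a nonlinear discrete-time system $x_{k+1} = g(x_k)$, and $U$ is a Turing Machine, there exists a map $g_U$ such that the iterates of $U$ satisfy $x_{k+1} = g_U(x_k)$. This establishes that the UTM is an iterative map and closes the argument.

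The main obstacle --- if one exists at all --- is conceptual rather than technical: one must be satisfied that the \emph{universality} of $U$ does not alter its status as an ordinary Turing Machine. Because universality is a statement about \emph{what} $U$ computes (its input--output behavior over encoded programs), and not about the \emph{mechanism} by which it computes, the representation guaranteed by Hy\"{o}tyniemi's theorem applies verbatim. I therefore anticipate that no separate construction is needed; the lemma follows purely by specializing the quantifier ``every Turing Machine'' in Theorem \ref{turing-iterate} to the single instance $U$.
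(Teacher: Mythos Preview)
Your proposal is correct and mirrors the paper's own proof exactly: the paper simply notes that every Turing Machine is an iterative map and that a Universal Turing Machine, being a Turing Machine, is therefore one as well. Your additional commentary on why universality does not affect the underlying mechanism is sound but goes beyond what the paper bothers to spell out.
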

\begin{proof}[Proof of the Lemma \ref{utm-iterate}]
Every Turing Machine is an iterative map. 
A Universal Turing machine, being a Turing machine is one. 
\end{proof}

\begin{subsection}{Phase Space of Turing Machine}
According to equation \eqref{t-iterate} , the tape of the Turing Machine is the iterate variable,
with symbols from $\Gamma$.
But to be able to represent the phase(iterate variable) of a Turing machine, 
the tape needs to be converted into a number.  
Hence, we need to discover a way to map every $T_n$ into a number.
This conversion is established using \emph{G\"{o}delization} techniques.

\begin{definition}\label{god}
\textbf{G\"{o}delization (G\"{o}del).}

Any string from an alphabet set $\Gamma$ can be represented as an integer in base `$b$' with $b = |\Gamma|$. 
To achieve this, create a one-one and onto G\"{o}del map $g : \Sigma \to D_b$ , where,
$$
D_b = \{ 0 , 1, 2, ... ,b-1 \} .
$$ 
G\"{o}delization or $\mathbb{G} : \Sigma^+ \to \mathbb{Z_+}$ then, is defined as follows:
 
A string of the form $w = w_{n-1}w_{n-2}...w_1w_0$ , with  $w_i \in \Gamma$ ,
can be mapped to an integer $I_w = \mathbb{G}(w)$ as follows:
$$
I_w = \mathbb{G}(w) = \sum\limits_{k=0}^{n-1} g(w_k) b^{k}
$$
\end{definition}
The common decimal system is a typical example of G\"{o}delization of symbols from $\{ 0,1,..,8,9\}$.
The binary system represents G\"{o}delization of symbols from  $\{ 0,1\} $.
As a far fetched example, any string from  the whole english alphabet, can be written as a base 26 integers!
 
\begin{definition}\label{rat}
\textbf{Rationalization.}

Any string `$w$' of length `n' ($|w|=n$) , created from an alphabet set $\Gamma$,
can be represented as a rational number $x \in \mathbb{Q}$.
We define the rationalization, $\rho$ , in terms of G\"{o}delization (definition \ref{god}) as follows :
$$
x = \rho(w) = \frac{\mathbb{G}(w)}{ b^n } = \mathbb{G}(w) b^{-n} = 0.w_{n-1}w_{n-2}...w_0
$$
By definition, $x \in [0,1]$.
\end{definition} 

The definition \eqref{rat} of rationalization effectively means that 
we are implicitly adding a decimal point to the left of the representation
to convert it to a floating point (rational) number in base `b'.

Based upon the definitions of \eqref{god} and \eqref{rat}, we now define the phase of a Turing Machine as follows :

\begin{definition}\label{phase}
\textbf{Phase of a Turing Machine.}

The phase of a Turing Machine having right infinite tape, 
at step `$n$' with respect to a particular G\"{o}del map `$g$', 
is the rationalization (definition \ref{rat} ) of it's tape state, $T_n$ i.e. ,
$$
x_n = \rho(T_n) .
$$

By definition, then, the phase is always bound 
inside the interval $[0,1]$ . 
\end{definition}

\begin{definition}\label{phase-space}
\textbf{Phase Space of a Turing Machine.}

From the earlier definition of phase, the phase space of a Turing Machine is a metric space, 
defined as below :

$$
X_t = \{x| x\in \mathbb{Q} \; ; \; 0 \le x \le 1 \}
$$
The associated distance function, `$d(x,y)$' , of the above phase space  
is defined as $d(x,y)= |x-y|$.It induces an incomplete metric space.  
\end{definition}

\begin{definition}\label{tm-dynamic}
\textbf{Turing Machine is a discrete time Dynamic System.}

A Turing Machine can be thought of as an iterative map, $M : X_t \to X_t $ .
\end{definition} 

\end{subsection}

\end{section}

\begin{section}{Chaos And The Turing Machines}\label{chaos-computation}

Undecidable problems in Computation are more interesting from the point of view of chaos,
as a decider would surely halt. The end result would not be chaotic.

The famous ``Halting Problem'' (Appendix \ref{ap_2}) 
is an example of one such undecidable problem which was the first of such to be discovered.

\begin{subsection}{Implication of the Halting Theorem}

We define a debugger machine which contains a Universal Turing Machine 
(definition \ref{UTM}) as ``sub''-machine within.
It is capable of executing the instructions of the ``simulated'' Turing Machine one instruction at a time.
This machine is also capable of reading the whole tape of the Universal Turing Machine (till the blank symbol), 
rationalize (definition \ref{rat}) it, and put that number into a table.

By the definition of rationalization (definition \ref{rat}), 
it just involves copying the current content of the tape, end to end. 
For one sided tape bounded at left,
it copies from the start of the tape to the first blank symbol.
The real number `$x_n$' is implicit from the content of the tape.
The decimal point is \emph{assumed to be in the left side of the tape.}

\begin{definition}\label{debugger}
\textbf{Debugger Machine.}

A ``Debugger Machine'' is a 3-tuple of $\Delta_M = (H_{table},M_{utm},S)$ where,
\begin{enumerate}
\item{ `$M_{utm}$'  is a Universal Turing Machine. It will be simulating a Turing Machine with some input, where output of the simulated turing machine at any step is with the 
left bounded, right infinite tape `$T$'. }
\item{ `$H_{table}$' is a tabular sequence $<h_n>$  where rationalized `$T$' is stored at every step. 
$$
H= <h_n> \; ; \; h_n = \rho(T_n) .
$$ 
Clearly, $H \subseteq X_t$ .
}
\item{$S = \{ Y, N\}$ is a set of Symbols to be output for a simulation. 
It will output ``Y'' where system halts, else it will output ``N''.}
\end{enumerate}  

\end{definition}
The machine would operate as follows :-
\begin{enumerate}

\item{ The tape `$T$' of $M_{utm}$ is populated first with the description of a Turing Machine `$M$', and the input to simulate `$M$' with. }
\item{ The machine $\Delta_M$ would next rationalize the tape `$T$' , and store it in table $H = \{ \rho(T)\}$.}
\item{ The machine $\Delta_M$ now, will use $M_{utm}$ to execute  `$\delta$' transitions 
unless a transition leads to writing to the tape. }
\item{ Now, $\Delta_M$ will rationalize the current tape, $\rho(T_n)$ , and search it from `$H$'. }
\item{ If $\rho(T_n) \in H $, the machine `$M$' encountered a loop. Halt execution, and report a loop by outputting ``N'' . }
\item{ If $\rho(T_n) \not \in H $ , then, insert $\rho(T_n)$  in `$H$', and resume from (3). }
\item{ If the $M_{utm}$ has halted, output ``Y'', for success. }

\end{enumerate}
One example of such computing would be computing the square root with Babylonian method 
using equation \eqref{sqroot_map} . The Tape `$T$' then would have the `$x_n$'.
That is, the tape is the scratch pad for the Universal Turing Machine, it won't 
contain the \emph{encoding} of the embedded Turing Machine.

\end{subsection}

\begin{subsection}{Debugger and the Halting Problem}
The debugger machine (definition \ref{debugger}) is nothing but a \emph{ would be decider} Turing Machine. 
By the Halting Theorem (definition \ref{halting-theorem}) $\Delta_M$ is guaranteed not to halt at every input for the tape `$T$'.

In case $\Delta_M$ does not halt, the generated sequence `$H$' needs to be carefully analyzed.

Clearly, $H$ at this stage $\{ h_0, h_1, h_2 , ... , h_n \}$ is a bounded sequence, as 
$$
0 \le h_n \le 1 \; ; \; n \in [0,\infty)
$$
At this stage there are only three possibilities with the sequence $H = < h_n>$ :-

\begin{enumerate}

\item { terminates finally and becomes a finite sequence, due to 
	\begin{enumerate}
	\item { either debugger machine halting, }
    \item { or  nothing further is written on the tape, ever after.}
    \end{enumerate}
} 

\item {  becomes infinite Cauchy Sequence.}
\item {  becomes infinite but remains a Non-Cauchy sequence. }

\end{enumerate}

This 3rd behavior establishes the next theorem:-

\begin{theorem}\label{turing-chaos}
\textbf{Chaos in Computation (Debugger Machine).}

A non halting $\Delta_M$ , exhibits chaotic dynamics when $<h_n>$ is non terminating, and is not a Cauchy Sequence. 
\end{theorem}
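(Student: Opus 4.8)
The plan is to recognize Theorem \ref{turing-chaos} as an essentially immediate corollary of Theorem \ref{ncabmc}, once the sequence $<h_n>$ recorded by the debugger is identified with the orbit of a genuine iterative map. First I would note that the simulated machine is deterministic, so by definition \ref{tm-dynamic} its tape transition together with the rationalization $\rho$ (definition \ref{rat}) induces a well-defined map $M : X_t \to X_t$ on the phase space $X_t$ of definition \ref{phase-space}, and the recorded table $H = <h_n>$, with $h_n = M(h_{n-1})$, is precisely the orbit $\mathcal{O}$ of $M$. It then suffices to verify the three hypotheses of Theorem \ref{ncabmc} for this orbit: boundedness, aperiodicity, and failure of the Cauchy property.

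Two of these are already in hand. Boundedness is immediate since, by definition \ref{phase} and definition \ref{phase-space}, every $h_n = \rho(T_n)$ lies in $[0,1]$, so the orbit is contained in a finite interval. The non-Cauchy condition is exactly the case (3) hypothesis of the theorem: we are assuming $<h_n>$ is non-terminating and fails to be a Cauchy sequence, which supplies the third requirement of Theorem \ref{ncabmc} directly.

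The step that requires the most care is aperiodicity, i.e. establishing that $h_i = h_j \implies i = j$. Here I would appeal to the construction of the debugger (definition \ref{debugger}): at step (5) the machine halts and outputs ``N'' precisely when $\rho(T_n) \in H$, that is, the moment any recorded rational value recurs. Since we are in the non-halting regime, no such recurrence ever occurs, and therefore no value in $<h_n>$ is ever repeated; this is exactly the aperiodicity (infinite period) demanded by hypothesis (2) of Lemma \ref{non-cauchy-bounded-seq} and Theorem \ref{ncabmc}. I regard this as the main obstacle, because it hinges on the loop-detection being performed on the rationalized values themselves rather than on raw tape configurations: one must be satisfied that distinctness of the recorded $h_n$ (and not merely of the tapes $T_n$) is what the halting test guarantees. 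Since the test compares $\rho(T_n)$ against the entries of $H$, non-halting yields precisely the distinctness of the $h_n$ that the orbit-level statement needs, so the subtlety resolves in our favour.

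With all three hypotheses verified, Theorem \ref{ncabmc} applies verbatim to the orbit of $M$: there is a set $\Psi \subseteq X_t$ (built as in equation \eqref{psii}) on which $M$ is topologically mixing with a dense orbit, and hence, by definition \ref{chaos}, chaotic. The incompleteness of $X_t$ noted in definition \ref{phase-space} is already accommodated, since Lemma \ref{non-cauchy-bounded-seq} constructs its approximate limit points by rational approximation rather than assuming the limits lie in $X_t$. This completes the reduction and establishes that a non-halting $\Delta_M$ whose recorded sequence is non-terminating and non-Cauchy exhibits chaotic dynamics.
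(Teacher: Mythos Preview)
Your proposal is correct and follows essentially the same route as the paper: identify $<h_n>$ with the orbit of a map on $X_t$, verify that it is bounded, aperiodic, and non-Cauchy, and then invoke Theorem~\ref{ncabmc}. If anything you are more explicit than the paper, which simply asserts aperiodicity without spelling out (as you do) that it follows from the debugger's loop-detection step~(5) in definition~\ref{debugger} together with the non-halting hypothesis.
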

\begin{proof}[Proof of the Theorem \ref{turing-chaos}.]
We note that the Turing machine $\Delta_M$ is a map:-
$$
\Delta_M : X_t \to X_t ,
$$  
and under the condition of the theorem
the orbit $<h_n>$ is aperiodic, bound and Non-Cauchy. Furthermore, $X_t=[0,1]$ is an interval.  
Therefore, we can invoke theorem \ref{ncabmc} to conclude that:-

$\Delta_M$ would exhibit chaotic dynamics.

\end{proof}

It is known that Universal Computation can be Devaney-chaotic \cite{dkb}.
This theorem reiterates the same from the perspective of Turing Machine.

The theorem \ref{turing-chaos} has significant practical consequences.
By tracing the sequence  $<h_n>$  for a long time, we can check 
if it indeed has been behaving like a convergent (Cauchy)sequence. 
If that is so, we can heuristically take a decision on halt the machine, 
assuming that the system would reach the limit point of the underlying Cauchy sequence.

However, clearly that is a probabilistic approximation.There is 
no guarantee that the sequence will \emph{remain Cauchy} after the observation.

\begin{theorem}\label{equiv-classes-h}
\textbf{Equivalent Classes on the Sequence $<h_n>$.}

The $<h_n>$ sequences can be categorized into three mutually exclusive set of sequences.
\begin{enumerate}
\item{Finite Sequences  $\mathcal{F}$ : 
These halt the debugger machine or do not write onto the tape
anything after finite time. }
\item{Infinite Cauchy Sequences,$\mathcal{C}$,each converging towards 
their limit points in $\mathbb{R}$.}
\item{Infinite Non-Cauchy sequences , $\mathcal{N}$ , which are chaotic sequences. }
\end{enumerate}

\end{theorem}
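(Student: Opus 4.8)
The plan is to establish the partition by a straightforward trichotomy argument, deferring all of the genuinely dynamical content to results already in hand. First I would observe that the classification rests on two successive applications of the law of the excluded middle. Any sequence $\langle h_n \rangle$ generated by the debugger machine $\Delta_M$ is either finite or infinite; if it is finite --- because $\Delta_M$ halts, or because no further symbol is ever written onto the tape after some step --- I would place it in $\mathcal{F}$. Every remaining sequence is infinite, and an infinite sequence either satisfies the Cauchy criterion or it does not. I would assign the Cauchy ones to $\mathcal{C}$ and the non-Cauchy ones to $\mathcal{N}$. Since ``finite versus infinite'' and ``Cauchy versus non-Cauchy'' are each pairs of complementary predicates, the three classes are pairwise disjoint and their union exhausts the whole collection of admissible sequences; this delivers both mutual exclusivity and exhaustiveness simultaneously.

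It then remains to justify the descriptive clauses attached to $\mathcal{C}$ and $\mathcal{N}$. For $\mathcal{C}$, I would note that although the phase space $X_t = \mathbb{Q} \cap [0,1]$ is an incomplete metric space, each $h_n$ nonetheless lies in the closed bounded interval $[0,1] \subseteq \mathbb{R}$. Because $\mathbb{R}$ is complete and $[0,1]$ is closed, any Cauchy $\langle h_n \rangle$ converges to a genuine limit point in $[0,1]$; this limit may be irrational and so need not belong to $X_t$, which is exactly why the statement locates the limit in $\mathbb{R}$ rather than in the phase space itself. For $\mathcal{N}$, I would invoke Theorem \ref{turing-chaos} directly: a non-terminating, non-Cauchy orbit $\langle h_n \rangle$ of $\Delta_M$ is bounded within $[0,1]$, aperiodic, and non-Cauchy, so the hypotheses of that theorem are met and the sequence is chaotic.

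The one point requiring care --- and the closest thing to an obstacle --- is that the Cauchy and non-Cauchy predicates must be read as properties of the entire infinite tail, not of any finite prefix. A sequence that merely \emph{looks} Cauchy over an observed window may fail the criterion later, as already emphasised after Theorem \ref{turing-chaos}; conversely, a sequence is Cauchy if and only if each of its tails is Cauchy, so ``eventually Cauchy'' and ``Cauchy'' coincide and the assignment to $\mathcal{C}$ or $\mathcal{N}$ is unambiguous. Once this is granted, no sequence can inhabit two classes, and the trichotomy is complete. I would close by remarking that the proof consumes only completeness of $\mathbb{R}$ together with Theorem \ref{turing-chaos}: all the substantive work on topological mixing and dense orbits has been absorbed into the latter, leaving the present statement a clean logical partition.
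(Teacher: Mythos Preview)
Your proposal is correct and follows essentially the same approach as the paper: the paper's entire proof is the single line ``The proof is trivial using theorem \ref{turing-chaos},'' and your argument is precisely a careful unpacking of that triviality---the finite/infinite and Cauchy/non-Cauchy dichotomies for the partition, completeness of $\mathbb{R}$ for the limit claim in $\mathcal{C}$, and Theorem \ref{turing-chaos} for the chaotic claim in $\mathcal{N}$. If anything, you have supplied the details the paper elides.
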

\begin{proof}[Proof of theorem \ref{equiv-classes-h}]
The proof is trivial using theorem \ref{turing-chaos}. 
\end{proof}

\end{subsection}

\end{section}

\begin{section}{Probability, Chaos and Turing Machines}\label{prob-chaos-tm}

In this section we  discuss the implication of the theorem \ref{turing-chaos} 
using very standard notion of `\emph{a.s}' within the realm of probability theory.

Appendix \ref{ap_3} has all the definitions needed to define the axioms of probability theory.

\begin{lemma}\label{debugger-h-cardinality}
\textbf{Cardinality of the sets $\mathcal{F}$ , $\mathcal{C}$ and $\mathcal{N}$. }

On the interval of rational numbers, $\mathbb{Q}= [0,1]$ , let :-
\begin{enumerate}

\item{$\mathcal{S}$ be the set of all sequences, finite or infinite, 
such that $x_i = x_j \implies i = j$ (no elements repeat ) ;}
\item{$\mathcal{F} \subset \mathcal{S}$ be the set of all  finite sequences (without repetition) ; }
\item{$\mathcal{C} \subset \mathcal{S}$ be the set of all Cauchy Sequences (infinite,without repetition) ; }
\item{$\mathcal{N} \subset \mathcal{S}$ be the set of all Non-Cauchy Sequences (infinite, without repetition).}

\end{enumerate}

Then $\mathcal{F}$ is null set with respect to $\mathcal{S}$, 
with $|\mathcal{S}| = |\mathcal{C}| = |\mathcal{N}|$.

\end{lemma}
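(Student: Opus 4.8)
The plan is to evaluate each cardinality against the fixed countable ground set $Q = \mathbb{Q}\cap[0,1]$, for which $|Q| = \aleph_0$, and then read off the null-set claim by comparing cardinalities. First I would dispose of $\mathcal{F}$: a finite repetition-free sequence of length $n$ is an injection $\{0,\dots,n-1\}\to Q$, so the length-$n$ sequences inject into $Q^n$, which is countable; the union over all $n\in\mathbb{N}$ is a countable union of countable sets, whence $|\mathcal{F}| = \aleph_0$.

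Next I would pin down the size of the infinite repetition-free sequences, which I will write as $\mathcal{S}_\infty = \mathcal{C}\cup\mathcal{N}$. For the upper bound, every such sequence is a map $\mathbb{N}\to Q$, so $|\mathcal{S}_\infty| \le |Q|^{\aleph_0} = \aleph_0^{\aleph_0} = 2^{\aleph_0}$. For the matching lower bound I would exhibit an injection from $\{0,1\}^{\mathbb{N}}$ into $\mathcal{S}_\infty$: fix a partition of $Q$ into disjoint pairs $\{a_k,b_k\}$, and to each binary string $s$ associate the enumeration listing the $k$-th pair as $(a_k,b_k)$ when $s_k=0$ and as $(b_k,a_k)$ otherwise. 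Each such sequence enumerates $Q$ exactly once (hence is repetition-free), and $s\ne s'$ forces a differing pair and thus a differing sequence, so $|\mathcal{S}_\infty| = 2^{\aleph_0}$. Since $\mathcal{S} = \mathcal{F}\cup\mathcal{S}_\infty$, this yields $|\mathcal{S}| = \aleph_0 + 2^{\aleph_0} = 2^{\aleph_0}$.

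For $\mathcal{C}$ and $\mathcal{N}$ the bound $\le 2^{\aleph_0}$ is inherited from $\mathcal{S}$, so only matching lower bounds remain. For $\mathcal{C}$ I would build a continuum of injective Cauchy sequences converging to $0$: picking for each $n$ a rational $x_n$ in the disjoint window $(2^{-(n+2)},2^{-(n+1)})$ forces $x_n\to 0$ while keeping all terms distinct, and since each window holds $\aleph_0$ admissible rationals, distinct selection rules give $\aleph_0^{\aleph_0} = 2^{\aleph_0}$ distinct sequences, all living in $(0,\tfrac12)$. For $\mathcal{N}$ I would interleave each of these continuum-many Cauchy bases with one fixed injective sequence taking distinct values in $(\tfrac12,1)$ and tending to $1$; the interleaved sequence accumulates near both $0$ and $1$, so it is non-Cauchy, while injectivity survives because the two value-ranges are disjoint. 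Distinct Cauchy bases give distinct interleavings, so $|\mathcal{N}| = 2^{\aleph_0}$. Finally, since $|\mathcal{F}| = \aleph_0 < 2^{\aleph_0} = |\mathcal{S}|$, the set $\mathcal{F}$ is negligible — a null set — relative to $\mathcal{S}$, and the three remaining cardinalities coincide at $2^{\aleph_0}$.

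The step I expect to be the main obstacle is the simultaneous control of injectivity in the lower-bound constructions. It is easy to produce a continuum of sequences and easy to produce a single repetition-free sequence, but I must secure both at once: that each constructed sequence never repeats a value (handled by the disjoint windows), that the Cauchy versus non-Cauchy dichotomy is genuinely respected, and crucially that the parameter-to-sequence map is itself injective so that $2^{\aleph_0}$ parameters really do yield $2^{\aleph_0}$ distinct sequences. For $\mathcal{N}$ the additional point is to verify that interleaving the high sequence never collides with a value used by the base, which is exactly why I confine the base to $(0,\tfrac12)$ and the interleaved terms to $(\tfrac12,1)$.
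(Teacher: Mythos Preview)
Your argument is correct and in several respects cleaner than the paper's own proof, but it proceeds along a genuinely different route. The paper computes $|\mathcal{S}|$ by the same product-of-alephs heuristic you use for the upper bound, but then identifies $\aleph_0^{\aleph_0}$ with $\aleph_1$ via the continuum hypothesis and works in the $\aleph$-scale throughout. For $|\mathcal{C}|$ it simply quotes the cardinality of the continuum, and for $|\mathcal{N}|$ it invokes the structural Lemma~\ref{ncbs} (every bounded non-Cauchy sequence is a countable interleaving of Cauchy subsequences) to stratify $\mathcal{N}$ into classes $\mathcal{N}(k)$ of sequences that mix exactly $k$ Cauchy components, then counts each class by an aleph-valued binomial coefficient $\binom{\aleph_1}{k}$ times a mixing factor $M(k)$, bounding $M(k)$ between $\aleph_0$ and $\aleph_1$ again via CH. Your approach replaces all of this machinery with explicit injections: the disjoint-pair encoding for $\mathcal{S}_\infty$, the disjoint-window construction for $\mathcal{C}$, and the two-range interleaving for $\mathcal{N}$. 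What you gain is that your proof is CH-free (everything is stated in terms of $2^{\aleph_0}$ rather than $\aleph_1$) and the injectivity of each parameter-to-sequence map is verified concretely; what the paper's route buys is a closer tie to the later measure-theoretic argument, since the $\mathcal{N}(k)$ stratification is reused in defining the sequence space $\mathcal{S}(n)$ and the map $f_S$ of Definition~\ref{sequence-space-function}.
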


\begin{proof}[Proof of lemma \ref{debugger-h-cardinality}]

We note that,
\begin{equation}\label{tp}
\mathcal{S} =  \mathcal{F} \cup \mathcal{C} \cup \mathcal{N}
\end{equation}

The set $\mathcal{S}$ is created by drawing one rational number at a time, without replacement from the set of rational numbers.
Clearly then, we have,
$$
|\mathcal{S}| = |\mathbb{Q}| \times |\mathbb{Q} - 1| \times | \mathbb{Q} - 2 | \times ...
$$ 

As $|\mathbb{Q}| = \aleph_0$, the above can be written as,
$$
|\mathcal{S}| = \aleph_0 (\aleph_0 - 1) ( \aleph_0 - 2 ) ...
$$ 
Using aleph arithmetic, we get,
$$
|\mathcal{S}| = \aleph_0 \aleph_0  \aleph_0  ... = \aleph_0 ^ {\aleph_0}
$$ 
Since 
$$
\aleph_0 ^ {\aleph_0} = \aleph_1
$$ 
we must have:-
\begin{equation}\label{card-s}
|\mathcal{S}|= \aleph_1 
\end{equation}

The set  $\mathcal{F}$  can be mapped one to one with the Natural Numbers Set.
It can be thought of as the set of numbers having finite decimal representation in base `$b$' by virtue of the rationalization \eqref{rat}. 
Then it is pretty obvious to arrive at equation \eqref{card-f}.  
\begin{equation}\label{card-f}
|\mathcal{F}| = \aleph_0
\end{equation}

Using equations \eqref{card-s} and \eqref{card-f} we can conclude that $\mathcal{F}$ is null set w.r.t. $\mathcal{S}$.

It is well known that the cardinality of the set $\mathcal{C}$ 
is the cardinality of continuum, that is:-
\begin{equation}\label{card-c}
|\mathcal{C}| = \aleph_1
\end{equation}

Using lemma \eqref{ncbs} we can say that every single Non-Cauchy Bounded Sequence 
comprises \emph{countable} Cauchy sequences. 

Consider the general class of Non Cauchy sequences, $\mathcal{N}(k)$, 
which are mixture of `$k$' distinct Cauchy sequences. Clearly, 
$$
\mathcal{C} = \mathcal{N}(1).
$$   
Then clearly:-
\begin{equation}\label{indexed-n-card}
 |\mathcal{N}(k)| =  {  | \mathcal{N}(1) |   \choose k  } M(k) = {  | \mathcal{C} |   \choose k  } M(k) = {  \aleph_1   \choose k  } M(k) = \aleph_1 M(k)
\end{equation}
where $M(k)$ is the all possible way `k' chosen infinite ( $\aleph_0$ ) sequences can be mixed.
However:- 
\begin{equation}\label{def-n-union}
\mathcal{N} = \bigcup_{k=2}^{\aleph_0} \mathcal{N}(k)   
\end{equation}
with disjoint $\mathcal{N}(k)$ :-
$$
\mathcal{N}(i) \cap \mathcal{N}(j) = \emptyset \; ; \; \forall i \ne j \; ; \; i,j \in \mathbb{Z}_+
$$
Using the cardinality from equation \eqref{indexed-n-card} and \eqref{def-n-union} with the disjoint property we do get:-
$$
|\mathcal{N} | = \sum_{k=2}^{\aleph_0} | \mathcal{N}(k) | = \sum_{k=2}^{\aleph_0} \aleph_1 M(k)
$$
Using continuum hypothesis (definition \ref{CH} ) the bounds for $M(k)$ becomes:
$$
\aleph_0 \le M(k) \le \aleph_1
$$
then, this would imply : 
$$
 \aleph_0 \aleph_1 \aleph_0 \le |\mathcal{N} | \le \aleph_0 \aleph_1 \aleph_1 
$$

which gets back immediately to:-
\begin{equation}\label{card-n}
|\mathcal{N}| = \aleph_1   
\end{equation} 

Hence, using equations \eqref{card-c} and \eqref{card-n} we get:-
$$
| \mathcal{C} | = | \mathcal{N} | = \aleph_1
$$   
which proves the theorem.
\end{proof}

That the set $\mathcal{S}$ has cardinality equal to the its own subsets, i.e. $\mathcal{C}$
and $\mathcal{N}$ should not surprise us. An example of this is the set of integers, 
which has the same cardinality $\aleph_0$ as that of the individual sets of even and odd integers.

The set, $\mathcal{N}$, is in some sense bigger than $\mathcal{C}$ even though 
they have same cardinality. Other examples exhibiting similar behavior include the set of continuous intervals within a line. 
The intervals  have same the cardinality ($\aleph_1$) 
as that of $\mathbb{R}^n$.
In effect the cardinality of the points in a line is the same as 
countably infinite dimensional Euclidean space. 

This brings us the question. What would be the chance that a randomly 
chosen sequence $s \in \mathcal{S}$ would belong to $\mathcal{C}$ or $\mathcal{N}$ ?
After all, the chance that a number, selected from a real ray $[0,\infty)$ would belong to 
any bounded line segment $[a,b]$ , is zero.
The idea is formalized next.  

\begin{definition}\label{sequence-space-function}
\textbf{Sequence to Space Mapping Function.}

Let us define $\mathcal{S}(n)$ to be,
\begin{equation}\label{s-n}
\mathcal{S}(n) = \bigcup_{k=1}^{n} \mathcal{N}(k) 
\end{equation}
and let:-
\begin{equation}\label{r-cauchy}
\mathbf{R}^n = [ r_k ] \; ; \; r_k \in [0,1] \; ; \; 0 < k \le n \; ; \;  i <j \implies r_i \le r_j  
\end{equation}
Then, the function :- 
\begin{equation}\label{f-s} 
f_{S} : \mathcal{S}(n) \rightarrow \mathbf{R}^n
\end{equation}
will be called Generalized Sequence to Space Mapping Function.
The co-ordinates are found as follows.

Let $lim( \mathcal{C}_k )$ denotes the limit of the Cauchy sequence $\mathcal{C}_k$.
Let $s \in \mathcal{S}(n)$ be a mixture of Cauchy sequences $|\{ \mathcal{C}_k \}| = m $. 
Then,
\begin{equation}\label{r-k}
r_k = \begin{cases}  lim( \mathcal{C}_k )  &\text{if }  0 < k \le m\\ 
 lim( \mathcal{C}_m )  & \text{if } m < k \le n 
\end{cases} 
\end{equation}

\end{definition}
In particular, if the sequence $s \in \mathcal{S}(n)$ is also in $\mathcal{C}$,
with $lim( \mathcal{C} ) = r $
then `$s$' will be mapped to $\mathbf{r}_C$ where,
\begin{equation}\label{cauchy-space}
\mathbf{r}_C = [ r_i ] \;  ; \; \forall i \;  r_i = r \; ;\; r \in [0,1]    
\end{equation}

We note that the curious $r \in [0,1]$ is the outcome of  
rationalization of the sequences  as definition \eqref{rat} with
all the Cauchy sequences having individual limit points, $c \in [0,1]$. 
If we have a sequence `$s_2$' of two Cauchy sequences mixed, 
individually converging to points $\{ a, b\}$, with $a\le b$, 
we would denote the sequence as $s_2 = ( a , b )$.
This ensures that the permutations of components does not matter.

\begin{definition}\label{sequence-space}
\textbf{Sequence Space.}
 
Every  sequence  $s \in \mathcal{S}(n)$ can be mapped into $\mathbf{R}^n$ (equation \ref{r-cauchy}) 
using the sequence space function $f_S$ (equation \ref{f-s}) of definition \eqref{sequence-space-function}.
The space so constructed is to be defined as sequence space.  
\end{definition}

\begin{definition}\label{sequence-measure}
\textbf{Sequence Measure.}

Let $A \subseteq \mathcal{S}(n)$ .
The sequence measure of $A$ is defined as,
$$
\mu(A) = n! V_n(A)
$$
where $V_n(A)$ is the volume occupied by the set $A \in \mathbb{R}^n$.  
\end{definition}

It is easy to check that the sequence measure (definition \ref{sequence-measure})
follows all the properties required for a measure (definition \ref{measure}).
Moreover, `$\mu$' is a probability measure, because,
$$
\mu(\mathcal{S}(n)) = n!V(\mathcal{S}(n)) = 1 \; ; \; \forall n > 0 .
$$

\begin{theorem}\label{null-sequence}
\textbf{ Measure of Cauchy Sequences is 0 over Sequence Space. }

Let $\mathcal{C} \subset \mathcal{S}(n)$. Then,
$$  
V_n(\mathcal{C})=0 \; ; \; n > 1 ,
$$ 
and subsequently $\mu(\mathcal{C}) = 0 $,
under the tenets of sequence measure (definition \ref{sequence-measure})
over sequence space (definition \ref{sequence-space}) whenever $n>1$.
\end{theorem}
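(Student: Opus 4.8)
The plan is to reduce the statement to an elementary fact about Lebesgue volume: a one-dimensional segment sitting inside an $n$-dimensional space has $n$-dimensional volume zero as soon as $n>1$. Everything else is bookkeeping about the map $f_S$.

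First I would pin down the image $f_S(\mathcal{C}) \subseteq \mathbf{R}^n$. By the mapping rule \eqref{cauchy-space}, a pure Cauchy sequence with limit $r$ is sent to the point $(r,r,\dots,r)$, so the entire image of $\mathcal{C}$ is the main diagonal $\Delta = \{ (r,r,\dots,r) : r \in [0,1] \}$ of the ordered simplex \eqref{r-cauchy}. Conversely, since for every $r \in [0,1]$ there exists a Cauchy sequence of distinct rationals converging to $r$, all of $\Delta$ is attained. Hence, by the definition of $V_n$ as the volume occupied by the set in $\mathbb{R}^n$, computing $V_n(\mathcal{C})$ amounts to computing the $n$-dimensional volume of the line segment $\Delta$.

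Next I would bound that volume by a covering argument. Partitioning $[0,1]$ into $N$ equal subintervals, the piece of $\Delta$ lying over the $j$-th subinterval is contained in an axis-aligned cube of side $1/N$, so $\Delta$ is covered by $N$ cubes of total $n$-volume $N \cdot (1/N)^{n} = N^{\,1-n}$. For $n>1$ this tends to $0$ as $N \to \infty$, which forces $V_n(\Delta) = 0$ and therefore $V_n(\mathcal{C}) = 0$. The sequence measure (definition \ref{sequence-measure}) then yields $\mu(\mathcal{C}) = n!\, V_n(\mathcal{C}) = 0$, as claimed. The hypothesis $n>1$ is essential: for $n=1$ the ``diagonal'' is all of $[0,1]$ and has volume $1$, so the result genuinely fails there, which the statement already excludes.

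I do not expect a serious obstacle: once the image is correctly identified as the diagonal, the covering estimate is routine. The only two points demanding care are (i) confirming that the measure is to be evaluated on the image $f_S(\mathcal{C})$ inside $\mathbf{R}^n$ rather than on the abstract sequences themselves, so that the object being measured really is the one-dimensional segment $\Delta$; and (ii) checking that $V_n$ is the standard $n$-dimensional Lebesgue volume, so that a set confined to a one-dimensional affine subspace is null whenever $n>1$. Both are immediate from the definitions, so the argument stays at the level of the covering bound above.
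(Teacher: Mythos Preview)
Your proposal is correct and follows essentially the same line as the paper: both identify $f_S(\mathcal{C})$ with the main diagonal $\Delta$ of the unit hypercube and then invoke the fact that a one-dimensional segment has zero $n$-dimensional volume for $n>1$, noting separately that the $n=1$ case gives volume $1$. The only difference is that you supply an explicit cube-covering estimate $N\cdot(1/N)^n\to 0$ where the paper simply asserts the diagonal has zero area/volume, so your version is a strict refinement of the same argument.
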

\begin{proof}[Proof of theorem \ref{null-sequence}]

We note that under the effect of $f_S$ (equation \ref{f-s}) of definition \eqref{sequence-space-function} 
the Cauchy sequences are given by the equation \eqref{cauchy-space},
$$
\mathbf{r}_C = [ r_i ] \;  ; \; \forall i \;  r_i = r \; ;\; r \in [0,1]    
$$
represents a diagonal inside an `$n$'-dimensional hypercube for $n>1$.
For $n=1$ it is $\mathbb{R}$.
Then, for the case of $n=1$, the volume measure gets replaced with length measure
and we get :- 
$$
V_1(\mathcal{C}) = \lambda(\mathcal{C}) = 1 \text{ when } \mathcal{C} \subseteq \mathcal{S}(1)
$$ 
For $n=2$, it is the diagonal of a unit square, where the volume measure
has to be replaced by the area measure, and we get :-
$$
V_2(\mathcal{C}) = A(\mathcal{C}) = 0 \text{ when } \mathcal{C} \subset \mathcal{S}(2)  
$$  
For $n \ge 3$ it is the diagonal inside a unit hypercube, and hence:-
$$
V_n(\mathcal{C}) = 0 \text{ when } \mathcal{C} \subset \mathcal{S}(n) \; ; \; n \ge 3  
$$  
Hence, the set $\mathcal{C} \subset \mathcal{S}(n)$ with $n>1$ is a null set with 
sequence measure (definition \ref{sequence-measure}) (volume measure) 0.  
\end{proof}

With this discussion in mind, we formally present the theorem that states: 

\emph{Universal Computation, as debugger machine $\Delta_M$ runs it, 
is `almost surely' Chaotic. }

\begin{theorem}\label{ucc}
\textbf{Universal Computation performed by Debugger Machine is `Almost Surely' Chaotic. }

Let the machine $\Delta_M$ be given the task to simulate a random Turing Machine, $T_M$ ,
on a random input.
Then, the sequence $H = <h_n>$ of $\Delta_M$ will be chaotic with probability `1'.

\end{theorem}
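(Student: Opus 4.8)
The plan is to reduce the probabilistic statement to the measure-theoretic facts already in hand, and then invoke the deterministic chaos criterion. First I would recall that, by Theorem \ref{equiv-classes-h}, the sequence $H = \langle h_n \rangle$ produced by $\Delta_M$ must fall into exactly one of the three mutually exclusive classes $\mathcal{F}$, $\mathcal{C}$, or $\mathcal{N}$. By Theorem \ref{turing-chaos}, whenever $H \in \mathcal{N}$ the map $\Delta_M$ is chaotic. Hence the assertion ``$H$ is chaotic with probability $1$'' is equivalent to showing
$$
P(H \in \mathcal{N}) = 1, \qquad \text{equivalently} \qquad \mu(\mathcal{F} \cup \mathcal{C}) = 0,
$$
where $\mu$ is the probability measure of Definition \ref{sequence-measure}.

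Next I would dispatch the two ``non-chaotic'' classes separately. For the finite class, Lemma \ref{debugger-h-cardinality} gives $|\mathcal{F}| = \aleph_0$ against $|\mathcal{S}| = \aleph_1$, so $\mathcal{F}$ is a null set and $\mu(\mathcal{F}) = 0$. For the Cauchy class, Theorem \ref{null-sequence} shows that under $f_S$ the set $\mathcal{C}$ maps to the diagonal of the unit hypercube $\mathbf{R}^n$, whose $n$-dimensional volume vanishes for every $n > 1$; therefore $\mu(\mathcal{C}) = n!\,V_n(\mathcal{C}) = 0$. Since $\mathcal{F}$ and $\mathcal{C}$ are disjoint and $\mu$ is a probability measure (indeed countably additive), one has $\mu(\mathcal{F} \cup \mathcal{C}) = \mu(\mathcal{F}) + \mu(\mathcal{C}) = 0$, so the complement $\mathcal{N}$ carries full measure $1$. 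Combining this with the chaos criterion of Theorem \ref{turing-chaos} yields the claim.

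I anticipate two points needing care. The first is the passage from the per-dimension space $\mathcal{S}(n)$ to the full space $\mathcal{S} = \bigcup_n \mathcal{S}(n)$ (equations \eqref{s-n}, \eqref{def-n-union}): the measure $\mu$ is defined on each $\mathcal{S}(n)$ separately, and the degenerate case $n = 1$, where $\mathcal{S}(1) = \mathcal{C}$ and $V_1(\mathcal{C}) = 1$, must be excluded, since a single Cauchy component is exactly the convergent, non-chaotic situation. One must argue that a genuinely random computation, having no a priori reason to confine itself to a single limit point, realizes some $n > 1$, so that the operative conclusion is $\mu(\mathcal{C}) = 0$. The second, and I expect the harder, point is justifying that the experiment ``simulate a random $T_M$ on a random input'' actually induces the sequence measure $\mu$ of Definition \ref{sequence-measure} on the classes $\mathcal{F}, \mathcal{C}, \mathcal{N}$. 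This identification of the natural distribution over random computations with the volume-based sequence measure is the load-bearing modelling assumption, and the entire ``almost surely'' conclusion rests on its legitimacy.
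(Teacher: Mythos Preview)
Your proposal is correct and follows essentially the same route as the paper: partition $\mathcal{S}$ into $\mathcal{F}\cup\mathcal{C}\cup\mathcal{N}$, use Lemma~\ref{debugger-h-cardinality} to null out $\mathcal{F}$ and Theorem~\ref{null-sequence} to null out $\mathcal{C}$, then invoke the chaos criterion (the paper cites Theorem~\ref{ncabmc} directly rather than Theorem~\ref{turing-chaos}, but the content is the same). The two delicate points you flag --- the passage from $\mathcal{S}(n)$ to $\mathcal{S}$ via the limit $n\to\aleph_0$ and the identification of ``random Turing machine on random input'' with the sequence measure --- are handled in the paper only by assertion, so your caution there is well placed rather than a divergence.
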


\begin{proof}[Proof of theorem \ref{ucc}]
Under the tenet of $\Delta_M$ , $\mathcal{S}$ is the sample space. Hence,
$$
P(H \in \mathcal{S}) = 1
$$  
We note from equation \eqref{tp} that :-
\begin{equation}\label{tp_1}
P( H \in \mathcal{S} ) = P( H \in \mathcal{F} ) + P( H \in \mathcal{C} ) + P( H \in \mathcal{N} ) 
\end{equation}
The probability that $H$ will have a finite length is given by Lemma \ref{debugger-h-cardinality}.
\begin{equation}\label{tp_2}
P(Finite) = P( H \in \mathcal{F} ) = 0 .
\end{equation}
Hence, the $\Delta_M$ will almost surely not be of finite length.
This implies that the debugger machine \emph{`almost surely'} won't halt. Hence,
$$
P(H \in  \mathcal{C} \cup \mathcal{N} ) = 1 .
$$
We note here that the sequence measure (definition \ref{sequence-measure}) is a probability measure.
That is because in the general case, the measure adds up to the volume of a unit hypercube, 
as defined. In this regard, theorem \eqref{null-sequence} implies:-
$$
\forall n>1 \; ; \; V_n(\mathcal{C} \subset \mathcal{S}(n)) = 0  \implies P(s \in \mathcal{C}) = 0 
$$ 
for any $\mathcal{S}(n)$ with $n>1$.
However, we note that:- 
$$
\mathcal{S} = \lim_{n \to \aleph_0} \mathcal{S}(n).
$$ 
Applying Lemma \eqref{debugger-h-cardinality} and theorem \eqref{null-sequence}, 
for $H = <h_n>$, we get,
\begin{equation}\label{tp_3}
P(H \in \mathcal{C} ) = P(H \in \mathcal{N}(1) ) = 0 
\end{equation}
Clearly then using equations \eqref{tp_1} and \eqref{tp_2} and \eqref{tp_3} , 
the probability that the machine $\Delta_M$ does eventually end up in a Non-Cauchy sequence is,
\begin{equation}\label{tp_4}
P(H \in \mathcal{N} ) = 1 
\end{equation}
However, using theorem \eqref{ncabmc} we can say, when $H \in \mathcal{N}$ , that is the condition of chaos.
Hence, 
$$
P(<h_n>  \; \text{is  Chaotic}  ) = P(H \in \mathcal{N} ) = 1 ;  
$$ 
which establishes the theorem.
\end{proof}

\end{section}

\begin{section}{Implication}\label{implication}
Theorem \eqref{ucc} actually means that:-
 
\emph { A randomly drawn discrete state Universal Computing Mechanism, given a random input, 
would behave chaotically `almost surely'. }

The notion ``\emph{chaotically}''  is being defined as a condition when sequences of the tape symbols 
comprise multiple Cauchy sequences. Both the terms \emph{chaotic} and \emph{multiple Cauchy}
have deep implications, which is discussed further in the following subsections.

\begin{subsection}{Halting Problem : A New Perspective}
We start with the celebrated halting problem \eqref{halting-problem} itself.
Yes, it is a fact that we can \emph{`almost surety'} conclude that Turing Machines would not halt. 
We \emph{know now} that this is due to the presence of chaotic sequences!
We will demonstrate that in the below paragraph.

Let's assume for the moment that \emph{somehow} in some universe the sequences `\emph{almost surely}' do not get chaotic.
In that case,
$$
P(H) =  P(<h_n>  \; is \; Convergent \; ) = 1 
$$  
which would mean, the only problem is the precision of the result, not the result itself!
One can halt a computation just by being imprecise. The situation is not much different as of today.
Any attempt to find an irrational number with arbitrary precession  would not ever halt.
But then, with the choice of precision, we can halt it any time, with probability 1.

However, due to the  chaotic component (imprecise term) we can not do it for any input, or any sequence.
Thus, the crux of the halting problem 
stems from the undecidability  of \emph{`the fixed points of convergences'}. 
This in turn represents the chaotic part of computation.
\end{subsection}

\begin{subsection}{Simulating Neurons : Brain and Mind}
From time immemorial, the debate has always been around a mechanistic view of ``mind'' and the workings of the brain.
The mechanistic view applies Church Turing Thesis, upon brain, 
and treats it as just another computing device. 
This makes mind just as powerful as a Turing Machine.  
This view expresses the possibility of Turing Machines \emph{simulating} mind!

However, as we already established, Universal Computation can be chaotic. 
So even if brain works as a ``Universal Computer'', 
it might not be enough to predictably simulate it. 
The problems are two fold. 

\begin{enumerate}
\item{
The problem lies in applying the model of Turing machine. There is no provision to change data on the tape    
non deterministically in a Turing Machine. Real world behaves differently, 
where data changes non deterministically. 
Thought process, no matter how deep, eventually halts. 
Our perception changes based on data gathered by our senses. 
Mind is clearly not a \emph{``constant program, constant data''} computation. 
}
\item{
Another problem is due to the intrinsic property of any chaotic system,
i.e. \emph{sensitive dependence on initial conditions}. 
How can one set the initial conditions? With what precision ? 
Rational numbers do not have  finite precision. Clearly, even if we solve the problem of finding 
all the input parameters, a very small loss of precision would guarantee 
a totally different behavior of the simulated system under Turing Machine model.
}
\end{enumerate}
\end{subsection} 

\begin{subsection}{Learning Processes}
 We reflect about the learning process, where patterns are present
within large set of data set $X$, combined from different sources.
The learning process is all about discovering a function,  
$\mathcal{L}_k : X \to X $ such that  $\mathcal{K} \subseteq X $
can be recreated by using the fixed point iteration map $\mathcal{L}_k$ . 
If that is indeed the case, we may say that the system has learnt $\mathcal{L}_k$ .

But then there might be a set of such functions 
$\mathcal{L} = \{ \mathcal{L}_i \} $ operating on the domain $X$.
Generalized learning is all about discovering such 
$\mathcal{L}_k$ and adding them to the set of known $\mathcal{L}$.

A Randomly Drawn Turing Machine might be capable of producing the set $\mathcal{K}$ , alongside other sets.      
From the biological learning perspective, this has enormous value. 
Such a system is capable of detecting if the current state
was encountered earlier (e.g The debugger machine $\Delta_M$ ).
Further if it can somehow also detect the countable Cauchy subsequences 
$\mathcal{C}_i$, it can then  ``\emph{memorize}'' the process through which the sequences could be regenerated. 
This would be a ``\emph{learning}'' for the system in true sense.  

It can go further. The system which wants to concentrate on the subsequence $\mathcal{C}_r$ 
to arrive at the solution $\mathbf{a_r}$ , i.e.  $\mathcal{C}_r$'s limit point, 
can heuristically ``\emph{guess}''  and approximate the solution. 
This can happen even without solving the Turing Halting Problem as well.

The clustering of points in $X_t$ or the $H$ space, can be taken
as the signal that the Turing machine won't ever halt with the input.  
For more pathological cases, where the $H$ space would spread the whole of $X_t$ ,
more sophisticated guesses should be needed to heuristically stop and learn.

\end{subsection}

\end{section}

\begin{section}{Summary and Future Works}\label{summary}

In this paper, we have demonstrated that the Universal computation 
performed by the debugger machine \emph{is `almost surely' chaotic}.
The debugger machine is just a trick to formally showcase the inner chaotic nature of the universal computation.
The implication of this statement has far reaching consequences in the philosophy of learning as discussed in Section \ref{implication}. 

Historically, computation is thought to be all about finding directed answers by 
deterministic, mechanical, step by step procedures. In nature computation arose automatically, 
by the natural force of evolution, over billions of years, without any \emph{external programming}.
The solution to this puzzle lies in the idea that any mechanism capable of universal computation 
can be chaotic.

\emph{Chaotic, non halting computation} might generate  very good approximate solutions 
to multiple, unrelated problems.This is under the assumption 
that a computation can be halted at will. 
In nature, this happens frequently where a living organism simply dies!  
But those who survive, takes the \emph{learning} forward.

This paper is a small step towards the journey of solving the mystery of natural computation.
Further research would ensure the continuing quest for the most astonishing journey 
that the humans have ever embarked upon.
This is the journey to understand ourselves, who, in the immortal words of Carl Sagan are :
``unconscious matter grew into consciousness''.

\end{section}

\appendix
\begin{section}{Definitions of Section \ref{fpi-chaos}}\label{ap_1}
\begin{definition}\label{fp}
\textbf{Fixed Point of a function. }

For a function $f:X \to X$ , $x^*$ is said to be a fixed point, iff $f(x^*) = x^*$ .
\end{definition}

\begin{definition}\label{mp}
\textbf{Metric Space.}

A metric space is an ordered pair $(M,d)$ where $M$ is a set and $d$ is a metric on $M$ , i.e., a function:-

$$
d : M \times M \to \mathbb{R}
$$

such that for $x,y,z \in M$ , the following holds:-
\begin{enumerate}
\item { $d(x,y) \ge 0 $ }
\item { $ d(x,y) = 0 $ iff $x=y$ . }
\item { $d(x,y) = d(y,x) $ }
\item { $d(x,z) \le  d(x,y) + d (y,z)$ }
\end{enumerate}

The function `$d$' is also called ``distance function'' or simply ``distance''.
\end{definition}

\begin{definition}\label{cs}
\textbf{Cauchy Sequence in a Metric Space $(M,d)$ .}

Given a Metric space $(M,d)$ , the sequence $x_1,x_2,x_3,...$ of real numbers is called `Cauchy Sequence', if for every positive real number $\epsilon$ , 
there is a positive integer $N$ such that for all natural numbers $m,n > N$ the following holds:-
$$
d (x_m , x_n ) < \epsilon .
$$

\end{definition}
Roughly speaking, the terms of the sequence are getting closer and closer together in a way 
that suggests that the sequence ought to have a limit $x^*  \in M$ . Nonetheless, such a limit does not always exist within $M$ .

Note that by the term: \emph {sequence} we are implicitly assuming \emph {infinite sequence} , unless otherwise specified.
We will explicitly use finite sequences in Section \ref{chaos-computation}. 

\begin{definition}\label{cncms}
\textbf{Complete Metric Space.}

A metric space $(M,d)$ is called complete (or Cauchy) 
iff every Cauchy sequence (definition \ref{cs}) of points in $(M,d)$ has a limit , that is also in $M$ .
\end{definition} 

As an example of not-complete metric space take $\mathbb{Q}$ , the set of rational numbers. 
Consider for instance the sequence defined by $x_1 = 1$ and function $d$ is defined by standard
difference between $d(x,y) = |x-y|$ , then :-
$$
x_{n+1} = \frac{1}{2} \left ( x_n + \frac{2}{x_n} \right ) 
$$

This is a Cauchy sequence of rational numbers,
but it does not converge towards any rational limit, but to
$\sqrt{2}$ , but then $\sqrt{2} \not \in \mathbb{Q}$ . 

The closed interval $[0,1]$ is a Complete Metric space.

\begin{theorem}\label{bfpt}
\textbf{Banach Fixed Point Theorem \cite{bf} .}

Let $(X, d)$ be a non-empty complete metric space (definition \ref{cncms}) . 
Let $f : X \to X$ be a contraction mapping on $X$ , i.e.: there is a nonnegative real number $q < 1$  such that:-

\begin{equation}\label{cont_map}
d(f(x),f(y))  \le q . d(x,y) \; ; \; \forall x,y \in X  
\end{equation}

Then the map $f$ admits one and only one fixed point (definition \ref{fp} ) $x^*$ in X.  
Furthermore, this fixed point can be found as follows:- 
   
Start with an arbitrary element $x_0 \in X$ 
and define an iterative sequence by $x_n = f(x_{n-1})$ for $n = 1, 2, 3, ...$ .
This sequence converges, and its limit is $x^*$ . 
\end{theorem}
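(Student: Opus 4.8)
The plan is to split the statement into an existence half and a uniqueness half, with existence driven by the very iteration named in the theorem and uniqueness reduced to a one-line consequence of the contraction inequality \eqref{cont_map}. First I would fix an arbitrary $x_0 \in X$ and form the orbit $x_n = f(x_{n-1})$, then control the gaps between consecutive terms by iterating the contraction bound: since $d(x_{n+1},x_n) = d(f(x_n),f(x_{n-1})) \le q\, d(x_n,x_{n-1})$, a straightforward induction yields $d(x_{n+1},x_n) \le q^n\, d(x_1,x_0)$. This single estimate is what everything else rests on.

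Next I would upgrade the consecutive-gap bound to a genuine Cauchy estimate. For any $m > n$ the triangle inequality (property (4) of definition \ref{mp}) telescopes the successive jumps, and bounding the resulting sum by its geometric majorant gives $d(x_m,x_n) \le \frac{q^n}{1-q}\, d(x_1,x_0)$. Because $q < 1$, the right-hand side tends to $0$ as $n \to \infty$, so $\langle x_n \rangle$ is a Cauchy sequence in the sense of definition \ref{cs}. Invoking completeness of $(X,d)$ (definition \ref{cncms}) then supplies a limit $x^* \in X$, which handles the convergence claim of the theorem. To confirm $x^*$ is actually fixed, I would observe that a contraction is Lipschitz and hence continuous, so $f(x^*) = f(\lim x_n) = \lim f(x_n) = \lim x_{n+1} = x^*$.

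For uniqueness, suppose $x^*$ and $y^*$ are both fixed points. Applying \eqref{cont_map} directly gives $d(x^*,y^*) = d(f(x^*),f(y^*)) \le q\, d(x^*,y^*)$, and since $0 \le q < 1$ this inequality is consistent only with $d(x^*,y^*) = 0$, whence $x^* = y^*$ by property (2) of definition \ref{mp}. The same convergence argument applies to every starting point, so the limit produced by the iteration is independent of $x_0$ and coincides with this unique fixed point.

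I expect the only delicate step to be the Cauchy estimate itself, that is, committing to the telescoping-plus-geometric-sum bound rather than merely remarking that successive terms become close. The subtlety is that closeness of consecutive terms alone does not force the Cauchy property (the harmonic-type counterexample is the usual trap), so the geometric factor $q^n/(1-q)$ must be made explicit. Once that bound is in hand, the continuity of $f$ and the uniqueness contradiction are entirely routine.
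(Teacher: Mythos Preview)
Your proposal is correct and is the standard textbook proof of the Banach Fixed Point Theorem. The paper itself does not supply a proof of Theorem~\ref{bfpt}; it merely states the result and cites Banach's original article~\cite{bf}, treating it as background. Your argument (geometric control of successive gaps, telescoping to a Cauchy estimate, completeness, continuity of contractions, and the one-line uniqueness contradiction) is exactly the classical route and would serve as a self-contained proof where the paper has none.
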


\begin{definition}\label{orbit}
\textbf{Orbit.}

Let $f:X \to X$ be a function. 
The sequence $ \mathcal{O} = \{x_0, x_1,x_2,x_3,...\}$ where
$$
x_{n+1} = f(x_n) \; ; \; x_n \in X \; ; \; n \ge 0 
$$
is called an orbit (more precisely `forward orbit') of $f$. 

$f$ is said to have a `closed' or `periodic' orbit $ \mathcal{O}$ if $| \mathcal{O}| \ne \infty$ .
\end{definition}

\begin{definition}\label{top-space}
\textbf{Topological Space.}

Let the empty set be written as : $\emptyset$. Let $2^X$ denotes the power set, i.e. the set of all subsets of $X$.
A topological space is a set $X$ together with $\tau \subseteq 2^X$ satisfying the following axioms:-
\begin{enumerate}
\item{ $\emptyset \in \tau$ and $X \in \tau$ ,}
\item{ $\tau$ is closed under arbitrary union, }
\item{ $\tau$ is closed under finite intersection. }
\end{enumerate}
The set $\tau$ is called a topology of $X$.
\end{definition}

\begin{definition}\label{set-cover}
\textbf{Cover (Topology).}

Let 
$C = \{ U_\alpha : \alpha \in A \}$ is an indexed family of sets $ U_\alpha$.

$C$ is a cover of set $X$ if :-
$$
X \subseteq \bigcup_{\alpha \in A} U_\alpha 
$$
\end{definition}

\begin{definition}\label{basis}
\textbf{Basis (Topology).}

A set of sets $B = \{ B_\alpha \}$ are said to be a basis of topology over $X$ if:-
\begin{enumerate}
\item{ $B$ covers ( definition \ref{set-cover} ) $X$. }
\item{ Let $B_1, B_2 \in B$. Let $x \in B_1 \cap B_2$. 
Then there is a basis element $B_3$ such that $x \in B_3$ and $B_3 \subset B_1 \cap B_2$.
}
\end{enumerate}

\end{definition}

\begin{definition}\label{dense-set}
\textbf{Dense Set.}

Let $A$ be a subset of a topological space $X$. 
$A$ is dense in $X$ for any point $x \in X$, if any neighborhood of $x$ contains at least one point from $A$.
\end{definition}

The real numbers $\mathbb{R}$ with the usual topology have the rational numbers $\mathbb{Q}$ as a countable dense subset.

\begin{definition}\label{top-trans}
\textbf{Topological Transitivity(Mixing).}

A function $f:X \to X$ is topologically transitive(Mixing) if, given any every pair of non empty open sets $A,B \subset X$ , 
there is some positive integer $n$ such that 
$$
f^n(A) \cap B \ne \emptyset .
$$ 
where $f^n$ means n'th iterate of $f$ .

\end{definition}

\begin{definition}\label{bounded-seq}
\textbf{ Bounded Sequence.}

A sequence $<x_n>$ is called a bounded sequence iff :-
$$
\forall n \; \; l \le x_n \le u \; ; \;  -\infty < l  \le  u < \infty  
$$

The number `l' is called the lower bound of the sequence and 
`u' is called the upper bound of the sequence.
\end{definition}

\begin{lemma}\label{bcss}
\textbf{Bolzano-Weierstrass.}

Every bounded sequence has a convergent (Cauchy) subsequence.
\end{lemma}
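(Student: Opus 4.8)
The plan is to prove this classical result by the \emph{nested interval} (repeated bisection) construction, which fits naturally with the metric-space language used throughout the paper. Since $<x_n>$ is bounded, every term lies inside some closed interval $[l,u]$ with $-\infty < l \le u < \infty$. First I would bisect this interval into $[l,\frac{1}{2}(l+u)]$ and $[\frac{1}{2}(l+u),u]$. Because the sequence has infinitely many terms and the two halves cover $[l,u]$, at least one half must contain $x_n$ for infinitely many indices $n$ (a pigeonhole argument). I would select such a half, call it $I_1$, and pick one index $n_1$ with $x_{n_1} \in I_1$.

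The construction then proceeds by induction. Having chosen a nested interval $I_k$ of length $(u-l)/2^k$ that contains infinitely many terms, together with an index $n_k$, I would bisect $I_k$, again retain a half $I_{k+1}$ holding infinitely many terms, and crucially choose an index $n_{k+1} > n_k$ with $x_{n_{k+1}} \in I_{k+1}$. This yields a subsequence $<x_{n_k}>$ and a descending chain of intervals $I_1 \supseteq I_2 \supseteq \cdots$ whose lengths shrink to zero.

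Finally I would verify that $<x_{n_k}>$ is a Cauchy sequence (definition \ref{cs}). For any $\epsilon > 0$, pick $N$ so large that $(u-l)/2^N < \epsilon$. Then for all $k,m > N$ both $x_{n_k}$ and $x_{n_m}$ lie in $I_N$, whence $|x_{n_k} - x_{n_m}| \le (u-l)/2^N < \epsilon$. Thus the extracted subsequence is Cauchy, and since $\mathbb{R}$ is a complete metric space (definition \ref{cncms}) it converges, which establishes the lemma in the sense of \emph{convergent (Cauchy)} used in the statement.

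The step I expect to be the main obstacle is guaranteeing at each stage that an index $n_{k+1}$ \emph{strictly larger} than $n_k$ can be found with $x_{n_{k+1}} \in I_{k+1}$. This is precisely where the ``infinitely many terms'' property must be maintained rather than merely ``some terms'': were a retained half to hold only finitely many terms, the index choices could stall. I would therefore make the invariant---that each $I_k$ contains $x_n$ for infinitely many $n$---an explicit part of the induction hypothesis, since it simultaneously justifies the bisection choice and secures the extractability of a strictly increasing index sequence. An alternative route, extracting a monotone subsequence via peak indices and then invoking the convergence of a bounded monotone sequence, would work equally well and sidestep the bisection bookkeeping; I would keep it in reserve.
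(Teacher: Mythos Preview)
Your bisection argument is correct and is one of the two standard proofs of Bolzano--Weierstrass; the invariant you single out (that each $I_k$ contains $x_n$ for infinitely many $n$) is exactly the point that needs care, and you handle it properly.

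However, there is nothing to compare against: the paper does \emph{not} prove Lemma~\ref{bcss}. It simply states Bolzano--Weierstrass as a classical fact in the appendix, follows it with a one-sentence remark that a bounded sequence may have many or few convergent subsequences, and then invokes it in the proof of Lemma~\ref{ncbs}. So your proposal supplies a proof where the paper offers none; either your nested-interval construction or your reserve ``peak index / monotone subsequence'' argument would be an acceptable way to fill that gap.
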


It is to be noted that a bounded sequence may have many convergent subsequences (for example, a sequence consisting of a counting of the rationals has subsequences converging to every real number) or rather few (for example a convergent sequence has all its subsequences having the same limit).

\begin{definition}\label{bncs}
\textbf{ Bounded Non-Cauchy Sequence in a metric space.}

A bounded sequence in a metric space is Non-Cauchy, iff the condition of definition \ref{cs} is not met.  
\end{definition}

\begin{lemma}\label{ncbs}
\textbf{Existence of countable Cauchy subsequences in a Bounded Non-Cauchy Sequence.}

A Bounded non-cauchy sequence in a metric space will contain countable number of Cauchy subsequences. 
\end{lemma}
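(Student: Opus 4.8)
The plan is to produce the decomposition constructively, using the Bolzano--Weierstrass lemma (Lemma \ref{bcss}) as the extraction engine and then arguing that only countably many extractions are ever needed. First I would record a preliminary reduction: since $\langle x_n \rangle$ is bounded but fails the Cauchy criterion (Definition \ref{bncs}), it cannot possess a single subsequential limit, because a bounded sequence all of whose subsequential limits coincide is itself Cauchy. Hence there exist at least two distinct accumulation points, which already forces any decomposition into convergent subsequences to contain at least two members. This is precisely what will later anchor the usage $\mathcal{N} = \bigcup_{k \ge 2} \mathcal{N}(k)$.

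Next I would set up the extraction proper. Applying Lemma \ref{bcss} yields one convergent (Cauchy) subsequence $\mathcal{C}_1$ with limit $a_1$. The subtlety is that naively deleting $\mathcal{C}_1$ and repeating need not exhaust the sequence in countably many rounds, so instead I would run the extraction diagonally against the enumeration of the index set: at stage $n$ I guarantee that the $n$-th index is placed into some convergent subsequence, choosing at each step a target accumulation point near $x_n$ and appending $x_n$ to the subsequence aimed at that target with a tolerance that is forced to shrink to zero. Because a convergent subsequence is insensitive to any finite initial block, each term can be absorbed regardless of its value, while every bucket still converges to its designated limit.

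The counting is then automatic. Every term of $\langle x_n \rangle$ is assigned to exactly one bucket, the buckets partition the countable index set $\mathbb{N}$, and a partition of a countable set into nonempty blocks has at most countably many blocks. Thus the family $\mathcal{C} = \{ \mathcal{C}_1, \mathcal{C}_2, \dots \}$ of Cauchy subsequences is countable, which is exactly the assertion.

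The hard part will be showing that the diagonal assignment can be carried out so that every bucket genuinely converges while every term is still consumed, that is, that the ordering constraint (a term appended at tolerance $1/m$ must occur later in the original enumeration than the previous term of its bucket) is simultaneously satisfiable across all buckets. I expect to discharge this by a back-and-forth argument: density of the chosen set of targets guarantees, for every $x_n$ and every prescribed tolerance, infinitely many admissible buckets, so a Hall-type matching between terms and slots exists. Verifying this compatibility, rather than invoking Bolzano--Weierstrass or counting the blocks, is where the real work lies.
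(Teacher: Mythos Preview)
Your route is sound and genuinely different from the paper's. The paper does precisely the naive extraction you caution against: pull out a convergent subsequence $\mathcal{C}_1$ via Bolzano--Weierstrass, delete it, repeat on the remainder to get $\mathcal{C}_2$, and so on, declaring that the resulting family $\{\mathcal{C}_1,\mathcal{C}_2,\dots\}$ is countable because the step index ``remains a natural number.'' Your worry that unguided deletion need not exhaust the sequence after $\omega$ rounds is legitimate, and the paper never addresses it. The one-line repair is in fact the counting remark you already make in passing: whatever disjoint convergent pieces one manufactures, they are nonempty subsets of the countable index set $\mathbb{N}$, so there can be at most countably many---this rescues the paper's transfinite extraction as well and makes your diagonal scheme unnecessary for the bare countability conclusion. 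Where you overestimate the difficulty is the ``hard part'': no Hall-type matching or back-and-forth is needed. Processing $n=1,2,3,\dots$ in order and appending $x_n$ to the first existing bucket whose current tolerance admits it (then halving that tolerance), or else opening a fresh bucket with tolerance $1$, already gives Cauchy buckets (consecutive gaps shrink geometrically), consumes every index at its own stage, and creates at most one bucket per stage. So your plan works, but the compatibility obstacle you anticipate dissolves under a straightforward greedy rule; what your approach buys over the paper's is an explicit guarantee of exhaustion, at the price of a small bookkeeping construction.
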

\begin{proof}[Proof of the Lemma \ref{ncbs} ]

We proof this by construction. 
Let $\mathcal{N} = <x_n> $ be a non Cauchy bounded sequence.
 
By using lemma \ref{bcss} we must have one Cauchy subsequence on  $\mathcal{N}$, 
and $\mathcal{N}$ itself is not a Cauchy sequence (by assertion).
We can isolate the cauchy subsequence and call it $\mathcal{C}_1$. 
Now, if one removes the sequence $\mathcal{C}_1$ from $\mathcal{N}$ , 
we have a new sequence $\mathcal{N}_1$. 

However, this $\mathcal{N}_1$ is also a bounded sequence, as any subsequence of a bounded sequence is also bounded.
Then, repeating lemma \ref{bcss} again, we must have one Cauchy subsequence on  $\mathcal{N}_1$ , and we call it $\mathcal{C}_2$.
That $\mathcal{C}_1 \ne \mathcal{C}_2$ is certain, because, else both the sequences can be merged together.

We can repeat this procedure, till step $n$ when sequence $\mathcal{N}_n$ will not have any element left.
By that time, we have a set of Cauchy Subsequences:-
$$
\mathcal{C} = \{ \mathcal{C}_1, \mathcal{C}_2, \mathcal{C}_3 ,...,\mathcal{C}_n \} \; ; \; n \ge 2 ,
$$  

with the limit points (points of attraction) set:-

$$
\mathcal{A}=  \{ a_1, a_2, a_3,..., a_n \} ;
$$

with :-
$$
a_i = a_j \implies i = j . 
$$ 

That is, no  $\mathcal{C}_i , \mathcal{C}_j$ have the same limit, unless $i=j$ , by virtue of the construction. 

It is possible if $|\mathcal{N}| \to \infty$ that the $n \to \infty$ , but $n$ remains a natural number, and hence, 
the set $\mathcal{C}$ remains countable. 
The construction of the set `$\mathcal{C}$' proves the lemma.
\end{proof}
In informal language, any bounded Non-Cauchy sequence is a result of ``mixing'' countable
Cauchy sequences, each of them having their own limit points. 
That means, in case there is a function $f:X \to X$ , whose orbit generates such bounded Non-Cauchy  
sequence, `$f$' essentially then acts as a ``mixer'' of multiple (but countable) Cauchy Sequences.

\end{section}

\begin{section}{Definitions for Section \ref{chaos-computation}}\label{ap_2}

\begin{definition}\label{TM}
\textbf{Turing Machine.}

A ``Turing Machine'' is a 7-tuple ($Q,\Sigma,\Gamma,\delta,q_0,q_a,q_r$), where:-
\begin{enumerate}
\item{ $Q$ is the set of states. }
\item{ $\Sigma$ is the set of input alphabets not containing the blank symbol $\beta$. }
\item{ $\Gamma$ is the tape alphabet , where $\beta \in \Gamma$ and $\Sigma \subseteq \Gamma$. }
\item{ $\delta : Q \times \Gamma \to Q \times \Gamma \times \{L , R \} $ is the transition function. }
\item{ $q_0 \in Q$ is start state.}
\item{ $q_a \in Q$ is the accept state.}
\item{ $q_r \in Q$ is the reject state.}
\end{enumerate}
\end{definition} 
According to standard notion $q_a \ne q_r$ , but we omit this requirement here, 
as we are not going to distinguish between two different types of halting (`accept and halt' vs `reject and halt') of Turing Machines.

A Turing Machine `$M$' (definition \ref{TM}) computes as follows.

Initially `$M$' receives the input $w=w_1w_2w_3...w_n \in \Sigma^* $ 
on the leftmost `$n$' squares on the tape, and the rest of the tape is filled up with blank symbol `$\beta$'.
The head starts on the leftmost square on the tape. 
As the input alphabet `$\Sigma$' does not contain the blank symbol `$\beta$', 
the first `$\beta$' marks end of input.

Once `$M$' starts, the computation proceeds wording to the rules of `$\delta$'.
However, if the head is already at the leftmost position, then, 
even if the `$\delta$' rule says move `$L$' , the head stays there.

The computation continues until the current state of the Turing Machine is either $q_a$ , or $q_r$ .
In lieu of that, the machine will continue running forever.

\begin{definition}\label{decider}
\textbf{Decider Turing Machine.}

A Turing Machine, which is guaranteed to halt on any input (i.e. reach one of the states \{$q_a,q_r$\} ) is called a decider.
\end{definition}

\begin{definition}\label{undecidable}
\textbf{Undecidable Problem.}

If for a given problem, it is impossible to construct a  decider (definition \ref{decider}) Turing Machine, 
then the problem is called undecidable problem.
\end{definition}

\begin{definition}\label{UTM}
\textbf{Universal Turing Machine.}

An `UTM' or `Universal Turing Machine' is a Turing Machine (definition \ref{TM}) such that it can simulate an 
arbitrary Turing machine on arbitrary input.
\end{definition}

\begin{definition}\label{CTT}
\textbf{Church Turing Thesis.}

Every effective computation can be carried out by a Turing machine (definition \ref{TM}), 
and hence by an Universal Turing Machine(definition \ref{UTM}).
\end{definition}

\begin{definition}\label{halting-problem}
\textbf{Halting problem.}

Given a description of a computer program, decide whether the program finishes running or continues to run forever. 
This is equivalent to the problem of deciding, given a program and an input,
whether the program will eventually halt when run with that input, or will run forever.
\end{definition}

\begin{theorem}\label{halting-theorem}
\textbf{Halting Problem  is ``undecidable'' (Turing).}

There can not be any general decider (definition \ref{decider}), 
which can decide on the Halting Problem. In other words, 
the problem of definition \ref{halting-problem} is ``Undecidable'' (definition \ref{undecidable}).
\end{theorem}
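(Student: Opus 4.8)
The plan is to argue by contradiction via the classical self-referential diagonalization that Turing introduced. I would begin by assuming the negation: suppose there exists a decider (definition \ref{decider}) $H$ that solves the halting problem (definition \ref{halting-problem}). Concretely, $H$ receives an encoded pair $(M,w)$, where $M$ is the description of an arbitrary Turing Machine (definition \ref{TM}) and $w$ is an input string, and $H$ is guaranteed to terminate on every such pair, entering $q_a$ when $M$ halts on $w$ and $q_r$ when $M$ runs forever on $w$. The crucial property I would exploit is that, being a genuine decider, $H$ is \emph{total}: it never itself fails to halt.

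Next I would construct an auxiliary machine $D$ that invokes $H$ as a sub-procedure and deliberately contradicts its verdict along the diagonal. On input a single machine description $M$, the machine $D$ first forms the pair $(M,M)$ --- feeding a machine its own description --- and simulates $H$ on it. I would then arrange the control flow so that $D$ enters an intentional infinite loop whenever $H$ reports that $M$ halts on $M$, and conversely $D$ halts (reaches $q_a$) whenever $H$ reports that $M$ does \emph{not} halt on $M$. Because $H$ is assumed total, $D$ is well defined on every input, and its behavior on each $M$ is completely determined as the opposite of what $H$ predicts for $M$ run on $M$.

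The contradiction is then extracted by running $D$ on its own description, i.e. by evaluating the behavior of $D$ on input $D$. If $D$ halts on $D$, then $H$ must have reported ``halts'' for the pair $(D,D)$, which by construction forces $D$ to loop forever on $D$ --- a contradiction. If instead $D$ does not halt on $D$, then $H$ reported ``does not halt'' for $(D,D)$, which by construction forces $D$ to halt on $D$ --- again a contradiction. Since both branches are impossible, the assumed decider $H$ cannot exist; hence no decider for the halting problem exists, which is exactly the statement that the problem is undecidable (definition \ref{undecidable}).

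I expect the only delicate point to be legitimizing the self-application step, namely that a Turing Machine may be supplied its own description as ordinary input data. This is justified by G\"{o}delization (definition \ref{god}), which encodes any machine as a finite string over the tape alphabet, together with the existence of a Universal Turing Machine (definition \ref{UTM}) that can simulate an arbitrary machine on an arbitrary input; these two facts make ``run $H$ on $(M,M)$'' and ``run $D$ on $D$'' perfectly admissible operations inside the model. Once that encoding is in place the diagonal argument is purely mechanical, and the contradiction follows with no further computation required.
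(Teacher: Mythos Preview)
Your argument is the standard Turing diagonalization and is correct as written; the self-application step is indeed justified exactly as you say, via G\"{o}delization and the existence of a UTM.

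Note, however, that the paper does not supply its own proof of this theorem at all: it is stated in the appendix as a classical result attributed to Turing and is used as background for the debugger-machine discussion in Section~\ref{chaos-computation}. So there is no ``paper's proof'' to compare against --- you have simply reconstructed the textbook argument that the authors take for granted.
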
   

\end{section}

\begin{section}{Definitions of Section \ref{prob-chaos-tm}}\label{ap_3}

\begin{definition}\label{sigma-a}
\textbf{Sigma ( $\sigma$ ) Algebra. }

Let $X$ be some set, with the notation $2^X$ is it's power set. Then the subset $\Sigma \subset 2^X$ is a Sigma ($\sigma$) algebra iff:-
\begin{enumerate}

\item{$\Sigma$ is non empty.}
\item{$\Sigma$ is closed under complementation, if $A \subset \Sigma $ , then  $ X  \setminus A \subset \Sigma $.}
\item{$\Sigma$ is closed under countable union, if $A_1 , A_2 , A_3,... \subset \Sigma $ , then \\
 $ A_1\cup A_2\cup A_3,... \subset \Sigma $.}

\end{enumerate}
\end{definition}

\begin{definition}\label{measure}
\textbf{Measure. }

Let $\Sigma$ be a sigma algebra (definition \ref{sigma-a}) over a set $X$. A function $\mu : \Sigma \to \bar{\mathbf{R}}$,
where $\bar{\mathbf{R}} = [-\infty, +\infty]$ is extended real line; is called a measure iff it satisfies the following properties:-

\begin{enumerate}

\item{ Non Negativity : 
$$
\mu ( E ) \ge 0 \; ; \; \forall E \in \Sigma
$$
}

\item{ Countable Additivity : For all countable collections $\{E_i\}_{i \in I}$ of pairwise disjoint set in $\Sigma$
$$
\mu \left ( \bigcup _{i \in I} E_i \right ) = \sum _{i \in I} \mu(E_i)   
$$
}

\item {Null Empty Set:
$$
\mu ( \emptyset ) = 0 
$$

}

\end{enumerate}

\end{definition}

\begin{definition}\label{measure-s}
\textbf{Measure Space. }
A measure space is a measurable space, that is a set considered together with the sigma-algebra (definition \ref{sigma-a}) on the set,
possessing a non negative measure (definition \ref{measure} ).

\end{definition}

\begin{definition}\label{CH}
\textbf{Continuum Hypothesis.}

Let $|\mathbb{Z}| = \aleph_0$ and $|\mathbb{R}|=\aleph_1$.  
If S is a set, then,  
$$
\not \exists S \; s.t. \; \aleph_0 < |S| < \aleph_1.     
$$
\end{definition}

\begin{definition}\label{GCH}
\textbf{Generalized Continuum Hypothesis.}

Let A be a set with $|A|=\aleph_{\alpha}$ , with ordinal $\alpha$ and the power set of A is denoted by $2^A$ , then:-
 
$$
|2^A| = \aleph_{\alpha + 1} = {\aleph_{\alpha}}^{\aleph_{\alpha}}
$$

and there is no set S such that  $\aleph_{\alpha} < |S| <\aleph_{\alpha + 1}$.
The ordinal of $\mathbb{Z}$ is 0, and hence that of $\mathbb{R}$ is 1, in line with the continuum hypothesis (def \ref{CH}). 
\end{definition}

\begin{definition}\label{null-set}
\textbf{Null Set.}

Let $X$ be a measurable space, let $\mu$ be a measure on $X$ , and let $N$ be a measurable set in $X$. 
$N$ is null set(or zero measure), if its measure $\mu(N)=0$ .
\end{definition}

To be noted that, the set $N$ might not be empty set at all. In fact, the set $\mathbb{Z}$ is null, 
with respect to the measurable set $\mathbb{R}$, while  $|\mathbb{Z}| = \aleph_0$ , that is,
there are countably infinite number of elements in that set.

\begin{definition}\label{prob-m}
\textbf{Probability Axioms (Kolmogorov axioms). }

Let ( $\Omega, F, P$ ) be a measure space, with $P(\Omega) = 1$. Then ($\Omega, F, P$ ) is a probability space with 
sample space $\Omega$ , event space $F$ , and probability measure $P$.  
\end{definition}

\begin{definition}\label{as}
\textbf{Almost Surely or With Probability 1. }

Let ( $\Omega, F, P$ ) be a probability space. One says that an event $E \in F$ happens almost surely if $P(E) = 1$.
\end{definition}

\begin{definition}\label{an}
\textbf{Almost Never or With Probability 0. }

Let ( $\Omega, F, P$ ) be a probability space. One says that an event $E \in F$ happens almost never if $P(E) = 0$.
\end{definition}

Let $S \in F$ be an \emph{Almost Sure} event, that is $P(S)=1$ . 
Clearly, the complimentary event $F \setminus S$ is an \emph{Almost Never} event with  $P(F \setminus S)=0$ .

We now present  theorems to showcase these concepts.

\begin{theorem}\label{non-null-select}
\textbf{ Selection probability of an element from a null set is 0.}

Let $N$ be a null set such that $N \subset X$ , and $X$ is measurable.
If an element $x$ is randomly selected from the set $X$, then:-
$$
P(x \in N) = 0 .
$$ 
In other words $P(x \not \in N) = 1$.
\end{theorem}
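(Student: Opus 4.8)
The plan is to realize the informal notion of ``randomly selecting $x$ from $X$'' as a genuine probability space in the sense of definition \ref{prob-m}, and then to read off $P(x \in N)$ directly from the measure of $N$. First I would set $\Omega = X$ and take $F$ to be the sigma-algebra (definition \ref{sigma-a}) on which the measure $\mu$ of definition \ref{null-set} is defined. Assuming $0 < \mu(X) < \infty$ --- which holds in every application in this paper, where $X$ is a bounded interval such as the phase space $X_t = [0,1]$ (definition \ref{phase-space}) with $\mu(X)=1$ --- I would define the normalized set function
$$
P(E) = \frac{\mu(E)}{\mu(X)} \; ; \; E \in F .
$$
The next step is to verify that $P$ satisfies the Kolmogorov axioms. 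Non-negativity and countable additivity are inherited from $\mu$ (definition \ref{measure}), since dividing by the positive constant $\mu(X)$ preserves both properties, and $P(\Omega) = \mu(X)/\mu(X) = 1$; hence $(\Omega, F, P)$ is indeed a probability space.

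With the probability space in hand, the event that the selected point lands in $N$ is precisely the measurable set $N$ itself, so $P(x \in N) = P(N)$. Here I would invoke the hypothesis that $N$ is a null set: by definition \ref{null-set} we have $\mu(N) = 0$, whence
$$
P(x \in N) = \frac{\mu(N)}{\mu(X)} = 0 .
$$
This is exactly an \emph{almost never} event in the sense of definition \ref{an}. For the complementary assertion I would use that $N$ and $X \setminus N$ are disjoint with union $X$; countable additivity then gives $P(N) + P(X \setminus N) = P(X) = 1$, and substituting $P(N) = 0$ yields $P(x \notin N) = 1$, an \emph{almost sure} event (definition \ref{as}).

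The main obstacle is conceptual rather than computational: it lies in justifying that ``randomly selected'' is faithfully modelled by the normalized measure $P$, and in particular that the normalization is legitimate. If $\mu(X) = \infty$ --- for instance when selecting uniformly from an unbounded ray $[0,\infty)$, as alluded to in the motivating discussion preceding the theorem --- then there is no canonical uniform probability measure, and the cleanest fix is to restrict attention to the case $0 < \mu(X) < \infty$, which suffices because every sample space used in this paper is a bounded interval. A secondary point to confirm would be measurability of the selection event, but this is immediate, since null sets are measurable by the very statement of definition \ref{null-set}, so no further argument is needed there.
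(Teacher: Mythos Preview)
Your argument is correct and follows essentially the same route as the paper: identify the event ``$x \in N$'' with the measurable set $N$, read off $P(N)=0$ from the null-set hypothesis, and obtain $P(x \notin N)=1$ by additivity on the disjoint decomposition $X = N \cup (X\setminus N)$. Your version is in fact more careful than the paper's, since you explicitly build $P$ from $\mu$ by normalization and flag the finiteness assumption $0<\mu(X)<\infty$, whereas the paper simply asserts that $P$ is a probability measure with $P(N)=0$ without spelling out the link between $P$ and the measure $\mu$ appearing in definition~\ref{null-set}.
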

\begin{proof}[Proof of theorem \ref{non-null-select}]

We note that $P$ is a measure over $X$ , and clearly if $N$ is a null set, then $P(N)=0$ by definition.
But $P(N)$ is the same as saying $P(N)= P(x \in N)$. Hence $P(x \in N)=0$.
But we know that:-
$$
P(X) = P(x \not \in N) + P(x \in N) .
$$
Therefore, by rearranging:-
$$
P(x \not \in N) = P(X) - P(x \in N) . 
$$
But then $P(X)=1$ and $P(x \in N)=0$.
Hence, $P(x \not \in N) = 1$.
\end{proof}
The result of theorem \ref{non-null-select} can be succinctly stated as:-

``That a randomly drawn number from real line $\mathbb{R}$ 
is \emph{`almost surely'} irrational''.

\end{section}


\begin{thebibliography}{9}



\bibitem{bf}
\sc{Banach, S}, 
\emph{ Sur les op\'{e}rations dans les ensembles abstraits et leur application aux \'{e}quations int \'{e}grales}. 
Fund. Math. 3(1922), 133-181.

\bibitem{hlr}
 {\sc Royden, H. L. }, 
 \emph {Real Analysis }. 
 Prentice Hall of India Private Limited, 2003.


\bibitem{ap}
 {\sc Arrowsmith, D.K. and Place, C.M.}, 
 \emph{An Introduction to Dynamical Systems}. 
 Cambridge University Press.


\bibitem{mbgs}
 {\sc Brin, Michael and Stuck, Garrett.}, 
 \emph{Introduction to Dynamical Systems}. 
 Cambridge University Press.


\bibitem{csi}
 {\sc Strang, Gilbert}, 
 \emph{A chaotic search for i}. 
 The College Mathematics Journal 22(1), January 1991, 3-12.


\bibitem{turing}
 {\sc Turing, A. M.}, 
 \emph{ On Computable Numbers, with an Application to the Entscheidungsproblem }.
 Proceedings of the London Mathematical Society, 2(42): 230-265

\bibitem{godel}
{\sc G\"{o}del, Kurt.}, 
\emph {On Formally Undecidable Propositions of Principia Mathematica and Related Systems I}.  
van Heijenoort 1967, 592-617.



\bibitem{ms}
  {\sc Sipser, Michael },
  \emph{Introduction to the Theory of Computation (Second Edition ed.)}.
   PWS Publishing, 2006.

\bibitem{hu}
  {\sc Hopcroft,John  and  Ullman, Jeffrey } ,
  \emph{Introduction to Automata Theory, Languages and Computation}.
   Addison-Wesley, Reading Mass, 1979.


\bibitem{dw}
  {\sc Davis, M. and Weyuker, E. },
  \emph{Computability, Complexity, and Languages - Fundamentals of Theoretical Computer Science}.
   Academic Press, New York, 1983.

   
   
\bibitem{dkb}
  {\sc Delvenne,Jean-Charles  and K\"{u}rka , Peter  and D. Blondel,Vincent  },
  \emph{Computational Universality in Symbolic Dynamical Systems}.
   M. Margenstern (Ed.): MCU 2004, LNCS 3354, pp. 104-115, 2005
      
\bibitem{hh}
  {\sc Heikki Hy\"{o}tyniemi},
  \emph{Turing Machines are Recurrent Neural Networks}.
   Proc. of SteP'96–Genes, Nets and Symbols, 1996.

\bibitem{bc}
  {\sc Bournez , Olivier  and Cosnard, Michel },
  \emph{On the Computational Power of Dynamical Systems and Hybrid Systems}.
   Theoretical Computer Science,1996 (168) , 417-459 



\bibitem{wf}
  {\sc Feller, W. },
  \emph{An Introduction to Probability Theory and Its Applications}.
   Vol. 1, 3rd ed. New York: Wiley,  1968.

\bibitem{wd}
  {\sc Williams, David. },
  \emph{Probability with Martingales}.
   Cambridge University Press, 1991.

\bibitem{anks}
  {\sc Stephen Wolfram},
  \emph{A New Kind of Science}. 
   2002.


\bibitem{cfnfs}
  {\sc Heinz-Otto Peitgen,	Hartmut J\"{u}rgens , Dietmar Saupe},
  \emph{Chaos and Fractals New Frontiers of Science}.
  Springer.

   
\end{thebibliography}
\end{document}